\newtheorem{proposition}{\em Proposition}
\newtheorem{theorem}{\em Theorem}
\newtheorem{conjecture}{\em Conjecture}
\newtheorem{definition}{\em Definition}
\newtheorem{lemma}{\em Lemma}
\newtheorem{corollary}{\em Corollary}
\newtheorem{remark}{\em Remark}
\journal{Sample Journal}
\begin{document}

\begin{frontmatter}

\title{Normal edge-colorings of cubic graphs}

\author[label1]{Giuseppe Mazzuoccolo\corref{cor1}}
\address[label1]{Dipartimento di Informatica,
Universita degli Studi di Verona, Strada le Grazie 15, 37134 Verona, Italy}

\cortext[cor1]{Corresponding author}

\ead{giuseppe.mazzuoccolo@univr.it}

\author[label1]{Vahan Mkrtchyan}
\ead{vahanmkrtchyan2002@ysu.am}


\begin{abstract}
A normal $k$-edge-coloring of a cubic graph is an edge-coloring with $k$ colors having the additional property that when looking at the set of colors assigned to any edge $e$ and the four edges adjacent it, we have either exactly five distinct colors or exactly three distinct colors. We denote by $\chi'_{N}(G)$ the smallest $k$, for which $G$ admits a normal $k$-edge-coloring.
Normal $k$-edge-colorings were introduced by Jaeger in order to study his well-known Petersen Coloring Conjecture. More precisely, it is known that proving $\chi'_{N}(G)\leq 5$ for every bridgeless cubic graph is equivalent to proving Petersen Coloring Conjecture and then, among others, Cycle Double Cover Conjecture and Berge-Fulkerson Conjecture.   
Considering the larger class of all simple cubic graphs (not necessarily bridgeless), some interesting questions naturally arise. For instance, there exist simple cubic graphs, not bridgeless, with $\chi'_{N}(G)=7$. On the other hand, the known best general upper bound for $\chi'_{N}(G)$ was $9$. Here, we improve it by proving that $\chi'_{N}(G)\leq7$ for any simple cubic graph $G$, which is best possible. We obtain this result by proving the existence of specific nowhere zero $\mathbb{Z}_2^2$-flows in $4$-edge-connected graphs.
\end{abstract}

\begin{keyword}
Cubic graph \sep normal edge-coloring \sep Petersen coloring conjecture \sep nowhere zero flow
\end{keyword}

\end{frontmatter}

\section{Introduction}
\label{sec:intro}

The Petersen Coloring Conjecture is an outstanding conjecture in graph theory which asserts that the edge-set of every bridgeless cubic graph $G$ can be colored by using as set of colors the edge-set of the Petersen graph $P$ in such a way that adjacent edges of $G$ receive as colors adjacent edges of $P$.
The conjecture is well-known and it is largely considered hard to prove since it implies some other classical conjectures in the field such as Cycle Double Cover Conjecture and Berge-Fulkerson Conjecture (see \cite{Fulkerson,Jaeger1985,Zhang1997}).
Jaeger, in \cite{Jaeger1985}, introduced an equivalent formulation of the Petersen Coloring Conjecture. More precisely, he proved that a bridgeless cubic graph is a counterexample to this conjecture, if and only if, it does not admit a normal edge-coloring (see Definitions \ref{def:poorrich} and \ref{def:normal} in Section \ref{sec:intro}) with at most $5$ colors. We call normal chromatic index of $G$, denoted by $\chi'_{N}(G)$, the minimum number of colors in a normal edge-coloring of $G$. In this terms, Petersen Coloring Conjecture is equivalent to saying that every bridgeless cubic graph has normal chromatic index at most 5. As far as we know, the best known upper bound for an arbitrary bridgeless cubic graph is $7$ (see Theorem \ref{thm:7bridgeless}). 
A similar situation appears in the larger class of all simple cubic graphs (not necessarily bridgeless). Indeed, there exist examples of cubic graphs with normal chromatic index $7$, but the best known upper bound was $9$ (see \cite{Bilkova12}). This bound is obtained by a refinement of the proof used in \cite{Andersen1992} to show the existence of a strong edge-coloring of a cubic graph with 10 colors.
The upper bound for bridgeless cubic graphs is deduced by the $8$-flow Theorem of Jaeger. Following the same spirit, we approach the problem of finding a better upper bound for the class of all simple cubic graph by using flow theory.
In Section \ref{sec:aux}, we prove some technical lemmas which are refinements of some well-known statements in flow theory, such as the existence of a nowhere-zero 4-flows in graphs with two edge-disjoint spanning trees. Then, we use such results in Section \ref{sec:Main} to prove that every simple cubic graph has normal chromatic index at most 7. Due to the existence of examples where 7 colors are necessary, the proved upper bound is best possible.
Finally, we propose an Appendix where we present counterexamples for two possible natural stronger versions of our lemmas in Section \ref{sec:Main}, by proving that in some sense the results are optimal. 

Now, let us introduce the main definitions and notions used in the paper in some detail. Graphs considered in this paper are finite and undirected. They do not contain loops,
though they may contain parallel edges. We also consider pseudo-graphs, which may
contain both loops and parallel edges, and simple graphs, which contain neither loops nor parallel edges. As usual, a loop contributes to the degree of a vertex by two.



For a graph $G$ and a vertex $v$ let $\partial_{G}(v)$ be the set of edges of $G$ that are incident to $v$ in $G$. If $G$ is cubic and $F\subseteq E(G)$, then $F$ is a perfect matching of $G$ if and only if $E(G)\setminus E(F)$ is an edge-set of a $2$-factor of $G$. This perfect matching and $2$-factor are said to be complementary to each other.




Let $G$ and $H$ be two cubic graphs. If there is a mapping $\phi:E(G)\rightarrow E(H)$, such that for each $v\in V(G)$ there is $w\in V(H)$ such that $\phi(\partial_{G}(v)) = \partial_{H}(w)$, then $\phi$ is called an $H$-coloring of $G$. If $G$ admits an $H$-coloring, then we will write $H
\prec G$. It can be easily seen that if $H\prec G$ and $K\prec H$, then $K\prec G$. In other words, $\prec$ is a transitive relation defined on the set of cubic graphs.


    \begin{figure}[ht]
	\begin{center}
	\begin{tikzpicture}[style=thick]
\draw (18:2cm) -- (90:2cm) -- (162:2cm) -- (234:2cm) --
(306:2cm) -- cycle;
\draw (18:1cm) -- (162:1cm) -- (306:1cm) -- (90:1cm) --
(234:1cm) -- cycle;
\foreach \x in {18,90,162,234,306}{
\draw (\x:1cm) -- (\x:2cm);
\draw[fill=black] (\x:2cm) circle (2pt);
\draw[fill=black] (\x:1cm) circle (2pt);
}
\end{tikzpicture}
	\end{center}
	\caption{The graph $P_{10}$.}\label{fig:Petersen10}
\end{figure}
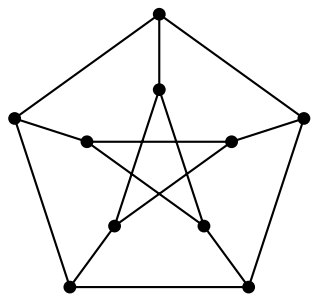

Let $P_{10}$ be the well-known Petersen graph (Figure \ref{fig:Petersen10}). The Petersen coloring conjecture of Jaeger states:
\begin{conjecture}\label{conj:P10conj} (Jaeger, 1988 \cite{Jaeger1988}) For any bridgeless cubic
graph $G$, one has $P_{10} \prec G$.
\end{conjecture}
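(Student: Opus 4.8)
The plan is to prove the conjecture in its equivalent form: by Jaeger's theorem, $P_{10}\prec G$ holds precisely when $G$ admits a normal edge-coloring with at most five colors, so it suffices to construct such a coloring for every bridgeless cubic graph $G$. The first reduction is routine. If $G$ is $3$-edge-colorable, then any proper $3$-edge-coloring is already normal: at every edge the two colors distinct from its own appear at both endpoints, so each edge is poor and sees exactly three colors, and three colors are subsumed by five. Hence I may assume $G$ is a snark. A second, standard reduction lets me pass to a minimal counterexample: using the behaviour of normal colorings across edge-cuts, a counterexample of smallest order can be taken to be cyclically $4$-edge-connected with girth at least five, so that the whole difficulty is concentrated on the class of nontrivial snarks.

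For the constructive core I would exploit the algebraic shape of a normal five-coloring. Fix a perfect matching $M$ of $G$, which exists by Petersen's theorem, and let $C=E(G)\setminus M$ be the complementary $2$-factor. The idea is to color $C$ from the four elements of $\mathbb{Z}_2^2$ and to reserve the fifth color for $M$, so that at a matching edge the surrounding claws are forced into the \emph{three distinct colors} pattern, while along the cycles of $C$ the nonzero values of a suitable nowhere-zero $\mathbb{Z}_2^2$-object keep each $2$-factor edge either poor or rich. This is exactly the kind of structure the lemmas of Section~\ref{sec:aux} are designed to produce, and it mirrors the way the fifteen edges of $P_{10}$ split into a $\mathbb{Z}_2^2$-part and a distinguished part; the point is to transport such a flow from an auxiliary $4$-edge-connected graph built from $G$ back to $G$ itself, converting flow values into the local poor/rich data of a normal coloring.

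The hard part, and the reason the statement remains a conjecture, is global consistency on snarks. Local choices around a single edge are trivially satisfiable, but the poor/rich constraints must be propagated coherently around every cycle of $C$, and on the odd cycles forced by a snark the same parity obstruction that prevents a proper $3$-edge-coloring re-emerges: one is liable to be left with a single edge that sees exactly four colors. I would attack this in the minimal-counterexample setting by a discharging or flow-decomposition argument that redistributes these defects and shows that they can always be annihilated in pairs, so that none survives. I fully expect this annihilation step to be the crux on which the whole proof stands or falls, since it is precisely here that earlier approaches to the Petersen Coloring Conjecture have stalled, and a genuinely new structural input, rather than a refinement of the flow arguments yielding the weaker bound $\chi'_{N}(G)\le 7$, is what the problem demands.
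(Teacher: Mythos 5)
You have not proved the statement, and neither does the paper: Conjecture \ref{conj:P10conj} is the Petersen Coloring Conjecture itself, which the paper states as an open problem and never claims to prove. The paper's actual theorems concern the weaker bounds $\chi'_N(G)\le 7$ (Theorems \ref{thm:7bridgeless} and \ref{thm:EnblockImplies7SimpleCase}), and its flow machinery (Lemmas \ref{lem:TwoEdgeEqual} and \ref{lem:3edgesAdjacent4connected}, extended to $\mathbb{Z}_2^3$-flows) inherently produces seven colors, not five, via Remark \ref{rem:Normal}. So there is no hidden proof for your sketch to converge to, and your own final paragraph concedes the ``annihilation of defects'' step is missing; as written, the proposal is a research plan around an open conjecture, not a proof.

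Beyond the admitted incompleteness, your core scheme is not merely stuck but provably cannot work. Suppose the fifth color is used exactly on a perfect matching $M$ and the $2$-factor $C$ is colored from the four elements of $\mathbb{Z}_2^2$. For an edge $uv\in C$ with color $a$, we have $S_c(u)=\{5,a,b\}$ and $S_c(v)=\{5,a,b'\}$, so $|S_c(u)\cup S_c(v)|\le 4$: a $C$-edge can never be rich, hence every $C$-edge must be poor, i.e.\ $b=b'$. Along a cycle $v_1v_2\cdots v_k$ of $C$ with edges $e_i=v_iv_{i+1}$, poorness of $e_i$ forces $c(e_{i-1})=c(e_{i+1})$, so the colors on the cycle alternate with period two; on an odd cycle this forces two adjacent edges to share a color, a contradiction. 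Since every $2$-factor of a non-$3$-edge-colorable cubic graph contains an odd cycle, your scheme succeeds only when $G$ is $3$-edge-colorable --- precisely the case where Conjecture \ref{conj:5NormalConj} is trivial. In particular the ``defects that see four colors'' are not sporadic objects to be paired off by discharging: one is unavoidable on every odd cycle of $C$, for every choice of $M$. Genuine normal $5$-edge-colorings (e.g.\ of $P_{10}$ itself, where all edges are rich) do not confine any single color to a perfect matching, so any viable attack must abandon this $4{+}1$ rigidity from the start; this is consistent with the paper's observation that refinements of the flow arguments yield $7$ and that $5$ requires something fundamentally new.
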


Note that the Petersen graph is the only 2-edge-connected cubic graph that can color all bridgeless cubic graphs \cite{Mkrt2013}. The conjecture is difficult to prove, since it can be seen that it implies the
following two classical conjectures:
\begin{conjecture} (Berge-Fulkerson, 1972 \cite{Fulkerson,Seymour}) Any bridgeless
cubic graph $G$ contains six (not necessarily distinct) perfect matchings
$F_1, \ldots , F_6$ such that any edge of $G$ belongs to exactly two of them.
\end{conjecture}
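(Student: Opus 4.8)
As this is one of the central open conjectures of the subject, what follows is an attack rather than a complete argument; I describe the direct polytope/flow route I would pursue and pinpoint where it stalls.

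\emph{Reduction to snarks.} I would first dispose of the $3$-edge-colorable case: a proper $3$-edge-coloring splits $E(G)$ into three perfect matchings $M_1,M_2,M_3$, and then the list $M_1,M_1,M_2,M_2,M_3,M_3$ covers every edge exactly twice. So it suffices to treat non-$3$-edge-colorable bridgeless cubic graphs (snarks). Since a Berge--Fulkerson family on the two sides of a nontrivial $2$- or $3$-edge-cut can be glued across the cut, standard reductions further bring us to cyclically $4$-edge-connected snarks, which I would take as the standing hypothesis. I would also record the clean reformulation: $G$ has a Berge--Fulkerson family if and only if the doubled graph $2G$ (each edge of $G$ replaced by two parallel edges) is properly $6$-edge-colorable; the six color classes project to the six perfect matchings, and each edge of $G$, appearing twice in $2G$, lands in exactly two of them.

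\emph{From the fractional cover to an integral one.} By Edmonds' perfect matching polytope theorem the uniform point assigning $1/3$ to each edge lies in the perfect matching polytope of $G$: the degree constraints give $\tfrac13|\partial_G(v)|=1$, and for every odd set $S$ the cut $\partial_G(S)$ is odd and, $G$ being bridgeless, has size at least $3$, so $\tfrac13|\partial_G(S)|\ge1$. Hence $\tfrac13\mathbf 1$ on $E(G)$ is a convex combination of incidence vectors of perfect matchings; scaling by $6$ expresses $2\cdot\mathbf 1$ as a nonnegative rational combination of such vectors whose coefficients sum to $6$. The Berge--Fulkerson conjecture is exactly the integral strengthening: that $2\cdot\mathbf 1=\sum_{j=1}^{6}\mathbf 1_{F_j}$ for six $0/1$ perfect-matching vectors. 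My plan is therefore to round the fractional cover to such an integral one, controlling simultaneously the covering multiplicity (exactly two per edge) and the number of matchings (exactly six), peeling off matchings one at a time by flow/augmentation arguments of the type developed in Section \ref{sec:aux} while preserving feasibility of the residual cover.

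\emph{The main obstacle.} The difficulty is precisely the passage from fractional to integral in the snark case, and it is why the conjecture remains open. The fractional cover always exists, but rounding it can force more than six matchings, and for a snark there is no $3$-edge-coloring to anchor the rounding; controlling the multiplicities globally across the (possibly odd) cycles of a chosen $2$-factor is exactly the point where an augmentation argument fails to close. This is the same barrier encountered by the weaker Berge and Fan--Raspaud statements. The only general route that sidesteps it is to prove the stronger Petersen Coloring Conjecture $P_{10}\prec G$ and pull back the known Berge--Fulkerson family of $P_{10}$; since that merely trades the problem for a harder one, an unconditional proof must defeat the integrality/snark obstruction head-on, which I do not expect to accomplish here.
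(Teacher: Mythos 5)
This statement is the Berge--Fulkerson Conjecture itself, and the paper does not prove it: it is stated as a conjecture, cited to Fulkerson and Seymour, and its only role in the paper is to motivate the Petersen Coloring Conjecture (Conjecture \ref{conj:P10conj}), which is known to imply it. There is therefore no proof in the paper to compare yours against, and your decision to present an attack and explicitly declare where it stalls, rather than claim a complete argument, is the correct response to the prompt.

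On the substance, your partial claims are all accurate. The $3$-edge-colorable case does follow by listing each of the three color classes twice; the reformulation as proper $6$-edge-colorability of the doubled graph $2G$ is the standard one; and the fractional statement is a genuine theorem: since $G$ is cubic and bridgeless, every odd cut has size at least $3$ (odd cuts in a cubic graph have odd size, and bridgelessness rules out size $1$), so $\tfrac13\mathbf 1$ satisfies Edmonds' degree and odd-cut constraints and lies in the perfect matching polytope. You correctly locate the open problem at the integrality step --- no known rounding or augmentation scheme converts the fractional cover into exactly six matchings covering each edge exactly twice on snarks --- and you correctly note that the only general route the paper gestures at is to prove $P_{10}\prec G$ and pull back the canonical Berge--Fulkerson family of the Petersen graph through the $H$-coloring, which trades the problem for a harder conjecture. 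Had you asserted that the flow lemmas of Section \ref{sec:aux} suffice to close the rounding step, that would have been a fatal gap; as written, your proposal asserts nothing false, and it honestly fails exactly where the conjecture is open.
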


\begin{conjecture}
((5, 2)-cycle-cover conjecture, \cite{Celmins1984,Preiss1981}) Any bridgeless
graph $G$ (not necessarily cubic) contains five even subgraphs such that any
edge of $G$ belongs to exactly
two of them.
\end{conjecture}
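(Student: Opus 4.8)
The statement is the \emph{(5,2)-cycle-cover conjecture}: every bridgeless graph $G$ (not necessarily cubic) admits five even subgraphs that together cover each edge exactly twice. My plan is to attack it directly, via a reduction to the cubic case followed by a dichotomy on the existence of a nowhere-zero $4$-flow, rather than by deriving it from a stronger coloring statement. First I would reduce to bridgeless cubic graphs: each vertex of degree two is suppressed by contraction, and each vertex of degree $d\geq 4$ is repeatedly split by a small bridgeless gadget so as to lower its degree while preserving bridgelessness. Un-splitting (contracting the new internal edges) sends even subgraphs to even subgraphs and preserves the coverage of every original edge, so a family of five even subgraphs realizing the property on the cubic refinement projects back to one on $G$. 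Hence it suffices to treat bridgeless cubic graphs.

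Next I would split on whether the cubic graph admits a nowhere-zero $4$-flow, equivalently a nowhere-zero $\mathbb{Z}_2^2$-flow $(f_1,f_2)$. If so, the three even subgraphs $\mathrm{supp}(f_1)$, $\mathrm{supp}(f_2)$ and $\mathrm{supp}(f_1+f_2)$ cover each edge exactly twice, since each nonzero value in $\mathbb{Z}_2^2$ lies in precisely two of the three associated supports. Adjoining two empty even subgraphs (or, if nonempty parts are insisted upon, splitting one of the three along its cycle components) upgrades this $3$-cover to the desired family of five. This settles every $3$-edge-colorable cubic graph, and the problem reduces to cubic graphs with no nowhere-zero $4$-flow.

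The main obstacle is exactly this remaining class. I would first dispose of small edge-cuts by induction, decomposing along $2$- and $3$-edge-cuts and gluing covers of the pieces so that the exactly-twice condition matches across each cut; this localizes the problem to cyclically $4$-edge-connected snarks. For these no flow is available as a starting point, and the five even subgraphs must be constructed directly, guided by the high girth and cyclic connectivity of the snark. I expect this to be the genuinely hard step: a $(5,2)$-cover is not known for all snarks, the statement strengthens the Cycle Double Cover Conjecture, and in the language of this paper it would already follow from a normal $5$-edge-coloring of the graph. A complete direct argument must therefore supply a new handle on the cycle space of snarks, beyond what flow theory and the upper bounds available here provide.
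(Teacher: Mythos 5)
There is no proof to compare against here: in the paper this statement is not a theorem but an open problem, the $(5,2)$-cycle-cover conjecture of Celmins and Preissmann, cited only as a known \emph{consequence} of the Petersen Coloring Conjecture (Conjecture \ref{conj:P10conj}), which is itself the motivating open problem of the whole paper. So any complete ``proof'' would be a major new result, and yours is not one — as you yourself concede in the final sentences. Your preliminary steps are sound and standard: the reduction to bridgeless cubic graphs by vertex expansion works because contraction of the gadget edges sends even subgraphs to even subgraphs and preserves the coverage multiplicity of every original edge; and in the presence of a nowhere-zero $\mathbb{Z}_2^2$-flow $(f_1,f_2)$, the three even subgraphs $\mathrm{supp}(f_1)$, $\mathrm{supp}(f_2)$, $\mathrm{supp}(f_1+f_2)$ do cover each edge exactly twice (each nonzero element of $\mathbb{Z}_2^2$ is nonzero in exactly two of the three coordinates $a$, $b$, $a+b$), and padding with two empty even subgraphs is legitimate. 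This correctly disposes of all cubic graphs admitting a nowhere-zero $4$-flow, in particular all $3$-edge-colorable ones.

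The genuine gap is that everything you have done merely relocates the conjecture to its known hard core. After your reductions (and even granting the $2$- and $3$-cut gluing, which itself needs a careful argument that the coverage patterns of five subgraphs on the cut edges can be made to match on both sides), what remains is exactly the class of cyclically $4$-edge-connected snarks, and for these you offer no construction at all — ``guided by the high girth and cyclic connectivity'' is a hope, not an argument. This residual case is not a technicality: it is the entire content of the conjecture, it already implies the Cycle Double Cover Conjecture, and no technique in this paper or in the literature is known to settle it (the paper's own contribution, $\chi'_N(G)\leq 7$ for simple cubic graphs, is deliberately weaker than the normal $5$-edge-coloring that would be needed, via Proposition \ref{prop:JaegerNormalColor}, to yield the $(5,2)$-cover). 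The proposal should therefore be read as a correct reduction of the conjecture to snarks — a reduction that is already folklore — and not as a proof.
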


A $k$-edge-coloring of a graph $G$ is an assignment of colors $\{1,...,k\}$ to edges of $G$, such that adjacent edges receive different colors. If $c$ is an edge-coloring of $G$, then for a vertex $v$ of $G$, let $S_{c}(v)$ be the set of colors that edges incident to $v$ receive. 

\begin{definition}\label{def:poorrich}
Let $uv$ be an edge of a cubic graph $G$ and $c$ be an edge-coloring of $G$. Then the edge $uv$ is called {\bf poor} or {\bf rich} with respect to $c$, if $|S_{c}(u)\cup S_{c}(v)|=3$ or $|S_{c}(u)\cup S_{c}(v)|=5$, respectively. 
\end{definition}

Edge-colorings having only poor edges are trivially $3$-edge-colorings of $G$. Also edge-colorings having only rich edges have been considered in the last years, and they are called strong edge-colorings.
In this paper, we will focus on the case when all edges must be either poor or rich.

\begin{definition}\label{def:normal}
An edge-coloring $c$ of a cubic graph is {\bf normal}, if any edge is rich or poor with respect to $c$. 
\end{definition} 

It is straightforward that an edge coloring which assigns a different color to every edge of a simple cubic graph is normal since all edges are rich. Hence, we can define the normal chromatic index of a simple cubic graph $G$, denoted by $\chi'_{N}(G)$, as the smallest $k$, for which $G$ admits a normal $k$-edge-coloring. In \cite{Jaeger1985}, Jaeger has shown that:

\begin{proposition}\label{prop:JaegerNormalColor}(Jaeger, \cite{Jaeger1985}) If $G$ is a cubic graph, then $P_{10}\prec G$, if and only if $G$ admits a normal $5$-edge-coloring.
\end{proposition}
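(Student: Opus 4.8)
The plan is to use the Kneser-graph model of the Petersen graph: identify $V(P_{10})$ with the ten $2$-element subsets of $\{1,\dots,5\}$, two subsets being adjacent precisely when they are disjoint. Under this identification each edge of $P_{10}$ joins two disjoint pairs $A,B$ and leaves exactly one element $m$ of $\{1,\dots,5\}$ uncovered, so I would define a canonical edge-coloring $c_P$ of $P_{10}$ by assigning the edge $AB$ the color $m$. The first step is to record the two properties of $c_P$ that drive everything: (i) the three edges at a vertex $\{x,y\}$ receive exactly the three colors of $\{1,\dots,5\}\setminus\{x,y\}$, so $c_P$ is a proper $5$-edge-coloring and $S_{c_P}(\{x,y\})=\{1,\dots,5\}\setminus\{x,y\}$; and (ii) for an edge $AB$ one gets $S_{c_P}(A)\cup S_{c_P}(B)=\{1,\dots,5\}$, so every edge of $P_{10}$ is rich and $c_P$ is itself a normal $5$-edge-coloring.

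For the forward implication, assume $P_{10}\prec G$ via an $H$-coloring $\phi$, and let $w\colon V(G)\to V(P_{10})$ be the associated vertex map, so that $\phi(\partial_G(v))=\partial_{P_{10}}(w(v))$ for every $v$. I would set $c:=c_P\circ\phi$. Since $\phi$ restricts to a bijection $\partial_G(v)\to\partial_{P_{10}}(w(v))$ and $c_P$ is proper, the three edges at $v$ receive three distinct colors, so $c$ is a proper $5$-edge-coloring, and moreover $S_c(v)=S_{c_P}(w(v))$. For an edge $e=uv$, property (ii) then gives $|S_c(u)\cup S_c(v)|=|S_{c_P}(w(u))\cup S_{c_P}(w(v))|\in\{3,5\}$: if $w(u)=w(v)$ the union has three elements and $e$ is poor, while if $w(u)\ne w(v)$ then $\phi(e)$ is the edge $w(u)w(v)$, which is rich. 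Hence $c$ is a normal $5$-edge-coloring.

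The substantive direction is the converse. Given a normal $5$-edge-coloring $c$ of $G$ with colors $\{1,\dots,5\}$, the three edges at each $v$ carry distinct colors, so $S_c(v)$ is a $3$-set; I would define $w(v):=\{1,\dots,5\}\setminus S_c(v)\in V(P_{10})$ and declare $\phi(e)$, for $e$ incident to $v$, to be the unique edge of $\partial_{P_{10}}(w(v))$ whose $c_P$-color equals $c(e)$, which exists and is unique by property (i). The crux is to verify that this assignment is well defined, i.e.\ that the two endpoints of $e=uv$ name the same edge of $P_{10}$. Writing $c(e)=m$, one compares the two candidates: when $e$ is rich, $S_c(u)=\{m\}\cup A$ and $S_c(v)=\{m\}\cup B$ with $A,B$ disjoint, $w(u)=B$, $w(v)=A$, and both prescriptions return the edge $AB$; when $e$ is poor, $S_c(u)=S_c(v)$, so $w(u)=w(v)$ and both prescriptions return the same color-$m$ edge of $\partial_{P_{10}}(w(u))$. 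Once well-definedness is settled, the condition $\phi(\partial_G(v))=\partial_{P_{10}}(w(v))$ is immediate from property (i), since the colors $c(e)$ over $e\in\partial_G(v)$ are exactly the three $c_P$-colors at $w(v)$, making $\phi$ a bijection $\partial_G(v)\to\partial_{P_{10}}(w(v))$; this yields $P_{10}\prec G$.

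I expect the poor-edge case in the converse to be the main obstacle, since it is precisely where the vertex map $w$ fails to be injective: both endpoints collapse to one vertex of $P_{10}$, so $\phi(e)$ cannot simply be read off as ``the edge between the images of the endpoints'' and must instead be extracted from the local color datum $m\in S_c(u)=S_c(v)$. It is exactly here that normality is used essentially, as it forces $|S_c(u)\cup S_c(v)|\in\{3,5\}$ and rules out the intermediate value $4$ that would break the correspondence with edges of $P_{10}$.
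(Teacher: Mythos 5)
Your proof is correct. Note that the paper does not prove this proposition at all: it is quoted from Jaeger's 1985 paper as a known equivalence, so there is no in-paper argument to compare against. Your self-contained argument via the Kneser model $K(5,2)$ of the Petersen graph, with the canonical coloring $c_P(AB)=\{1,\dots,5\}\setminus(A\cup B)$, is the standard way this equivalence is established, and both directions check out: properties (i) and (ii) of $c_P$ are right, the forward direction correctly splits into the cases $w(u)=w(v)$ (poor) and $w(u)\neq w(v)$ (rich), and the well-definedness analysis in the converse, including the observation that normality is exactly what excludes $|S_c(u)\cup S_c(v)|=4$, is the crux and is handled properly. The only step you leave implicit is in the forward direction: when $w(u)\neq w(v)$ you should note that $\phi(e)\in\partial_{P_{10}}(w(u))\cap\partial_{P_{10}}(w(v))$ forces $w(u)$ and $w(v)$ to be adjacent (otherwise this intersection is empty, and for non-adjacent distinct vertices the union of color sets would have size $4$); this is a one-line remark and does not affect correctness.
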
 This implies that Conjecture \ref{conj:P10conj} can be stated as follows:

\begin{conjecture}\label{conj:5NormalConj} For any bridgeless cubic graph $G$, $\chi'_{N}(G)\leq 5$.
\end{conjecture}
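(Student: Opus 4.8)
The statement is a reformulation of the (still open) Petersen Coloring Conjecture, Conjecture \ref{conj:P10conj}, rather than an independently provable theorem. Accordingly, what I would establish is not an unconditional bound but the \emph{equivalence} of Conjecture \ref{conj:5NormalConj} with Conjecture \ref{conj:P10conj}; the single tool needed is Proposition \ref{prop:JaegerNormalColor}.

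The plan is to argue graph by graph and then quantify. First I would fix a bridgeless cubic graph $G$. By Proposition \ref{prop:JaegerNormalColor}, $P_{10} \prec G$ holds if and only if $G$ admits a normal $5$-edge-coloring. It then remains to match the phrase ``admits a normal $5$-edge-coloring'' with the inequality $\chi'_{N}(G) \le 5$. By definition $\chi'_{N}(G)$ is the least $k$ for which $G$ has a normal $k$-edge-coloring; since any normal $k$-edge-coloring with $k \le 5$ uses colors drawn from $\{1,\dots,k\} \subseteq \{1,\dots,5\}$ and is therefore in particular a normal $5$-edge-coloring, the condition $\chi'_{N}(G) \le 5$ is equivalent to the existence of a normal $5$-edge-coloring of $G$. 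Combining the two equivalences yields, for each fixed bridgeless cubic $G$, that $P_{10} \prec G$ holds if and only if $\chi'_{N}(G) \le 5$.

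Finally I would pass to the universally quantified statements. Conjecture \ref{conj:P10conj} asserts $P_{10} \prec G$ for \emph{every} bridgeless cubic graph $G$, while Conjecture \ref{conj:5NormalConj} asserts $\chi'_{N}(G) \le 5$ for every such $G$; since the two defining conditions coincide for each individual $G$, the two universally quantified statements are equivalent. The only points requiring care are bookkeeping ones: Proposition \ref{prop:JaegerNormalColor} is stated for cubic graphs in general, so restricting attention to the bridgeless case is harmless, and one must confirm that a normal edge-coloring is in particular a proper edge-coloring with the stated color bound, so that it is genuinely counted by $\chi'_{N}$. There is no deep obstacle to overcome at this step — the difficulty lies entirely in Conjecture \ref{conj:P10conj} itself, which remains open; what the present argument records is only that Jaeger's proposition lets one trade the homomorphism condition $P_{10} \prec G$ for the purely chromatic condition $\chi'_{N}(G) \le 5$.
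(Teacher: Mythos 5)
Your proposal is correct and matches the paper's treatment exactly: the statement is an open conjecture, and the paper likewise ``proves'' nothing unconditional, only remarking that Proposition \ref{prop:JaegerNormalColor} lets Conjecture \ref{conj:P10conj} be restated as Conjecture \ref{conj:5NormalConj}. Your added bookkeeping (that $\chi'_{N}(G)\leq 5$ is equivalent to the existence of a normal $5$-edge-coloring, and that the quantifiers match) is exactly the implicit content of the paper's one-line derivation.
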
 Observe that Conjecture \ref{conj:5NormalConj} is trivial for $3$-edge-colorable cubic graphs. This is true because in any $3$-edge-coloring $c$ of a cubic graph $G$ any edge $e$ is poor, hence $c$ is a normal edge-coloring of $G$. Thus non-$3$-edge-colorable cubic graphs are the main obstacle for Conjecture \ref{conj:5NormalConj}. Note that Conjecture \ref{conj:5NormalConj} is verified for some non-$3$-edge-colorable bridgeless cubic graphs in \cite{HaggSteff2013}. Finally, let us note that in \cite{Samal2011} the percentage of edges of a bridgeless cubic graph, which can be made poor or rich in a 5-edge-coloring, is investigated.

If we consider the larger class of simple cubic graphs, without any assumption on connectivity, some interesting questions naturally arise. Indeed, examples of simple cubic graphs with $\chi'_{N}(G) > 5$ can be constructed in this class, and hence it is natural to ask for a possible upper bound for this parameter. 

Let us remark that any strong edge-coloring is, in particular, a normal edge-coloring. Andersen has shown in \cite{Andersen1992} that any simple cubic graph admits a strong edge-coloring with ten colors, hence ten is also an upper-bound for the normal chromatic index. The result was improved, following the approach of Andersen, in \cite{Bilkova12}, where it is shown that any simple cubic graph admits a normal edge-coloring with nine colors.
In this paper, we prove that if $G$ is any simple cubic graph, then $\chi'_{N}(G)\leq 7$. We complement this result by constructing an infinite family of simple cubic graphs with $\chi'_{N}(G)= 7$. Thus our result is best-possible. 

\section{Some Auxiliary Results}
\label{sec:aux}

In this section, we present some results that will be helpful in obtaining Theorem \ref{thm:EnblockImplies7SimpleCase} which is the main result of this paper.





\begin{theorem}\label{thm:Jaeger8flow} (Jaeger, \cite{Jaeger1975,Jaeger1979}) Any bridgeless graph admits a nowhere-zero $\mathbb{Z}^3_{2}$-flow.
\end{theorem}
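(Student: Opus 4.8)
The plan is to translate the flow statement into a covering statement about even subgraphs, which is where all the combinatorial content lies. Writing $\mathbb{Z}_2^3=\mathbb{Z}_2\times\mathbb{Z}_2\times\mathbb{Z}_2$ and recording a prospective flow coordinatewise as $\phi=(\phi_1,\phi_2,\phi_3)$, Kirchhoff's law at each vertex (over $\mathbb{Z}_2$, where orientations are irrelevant) decouples into three independent $\mathbb{Z}_2$-flow conditions, one per coordinate. A $\mathbb{Z}_2$-flow is exactly the indicator function of an even subgraph, i.e.\ an element of the binary cycle space, so $\phi$ amounts to a choice of three even subgraphs $C_1,C_2,C_3$. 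The nowhere-zero condition $\phi(e)\neq(0,0,0)$ says precisely that no edge escapes all three, i.e.\ $C_1\cup C_2\cup C_3=E(G)$. Thus I would reduce the theorem to the purely combinatorial assertion that the edge set of every bridgeless graph is the union of three even subgraphs; since both evenness and the covering condition are componentwise, I may assume $G$ connected, so that ``bridgeless'' means $2$-edge-connected.

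For the covering statement I would fix a spanning tree $T$ and work with the induced flow description. Each non-tree edge $e$ may be given an arbitrary label $\psi(e)\in\mathbb{Z}_2^3$, after which the value forced on a tree edge $t$ is the sum $\sum_{e\in S_t}\psi(e)$, where $S_t$ is the set of non-tree edges whose fundamental cycle passes through $t$; equivalently $\{t\}\cup S_t$ is the fundamental cut of $t$. Bridgelessness enters exactly here: deleting a tree edge $t$ splits $T$ into two parts, and the absence of a bridge supplies a non-tree edge across the split, so every $S_t$ is nonempty and the fundamental cycles already cover $E(G)$. The problem then becomes: choose $\psi(e)\in\mathbb{Z}_2^3\setminus\{0\}$ on the non-tree edges so that every forced cut-sum $\sum_{e\in S_t}\psi(e)$ is also nonzero. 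I would attack this by an inductive peeling argument exploiting the laminar (nested) structure of the fundamental cuts with respect to a rooted $T$, fixing the labels in an order compatible with this nesting so that each forced cut-sum can be steered away from the single forbidden value $0$.

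I expect the simultaneous nowhere-zero condition on the forced tree-edge values to be the main obstacle: a naive random labelling followed by a union bound over the $|V(G)|-1$ tree edges fails, since each individual cut-sum can vanish with probability on the order of $1/8$, so the structural laminar ordering really must be used rather than a counting estimate. It is instructive to isolate the two extremes. When $G$ is $4$-edge-connected the difficulty evaporates: the Nash--Williams--Tutte theorem yields two edge-disjoint spanning trees, the well-known fact that two edge-disjoint spanning trees force a nowhere-zero $\mathbb{Z}_2^2$-flow applies, and such a flow $f$ embeds into $\mathbb{Z}_2^3$ via $e\mapsto(f(e),0)$ while remaining nowhere zero and flow-conserving. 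At the other extreme, cubic snarks such as the Petersen graph admit no nowhere-zero $\mathbb{Z}_2^2$-flow at all, so no two even subgraphs can cover their edges; these are precisely the cases that force the full three-subgraph argument, and confirming that the spanning-tree labelling succeeds even there is the crux of the proof.
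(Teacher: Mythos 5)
The paper states this result as a citation to Jaeger and gives no proof, so the comparison has to be with Jaeger's known argument. Your reduction is correct and standard: a $\mathbb{Z}_2^3$-flow decouples into three $\mathbb{Z}_2$-flows, i.e.\ three even subgraphs, and nowhere-zero means they cover $E(G)$; the restriction to connected, $2$-edge-connected $G$ is also fine, as is the observation that the $4$-edge-connected case follows from Nash--Williams--Tutte via two edge-disjoint spanning trees. But the step you yourself identify as the crux --- choosing $\psi(e)\in\mathbb{Z}_2^3\setminus\{0\}$ on the chords of a single spanning tree so that every forced cut-sum $\sum_{e\in S_t}\psi(e)$ is nonzero --- is not carried out, and the ``laminar peeling'' you invoke does not obviously exist. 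The subtrees below the tree edges are laminar as vertex sets, but the chord sets $S_t$ are not nested: a single chord typically lies in $S_t$ for every tree edge $t$ on the tree path joining its endpoints, so each label $\psi(e)$ enters many cut-sums at once. A greedy order that ``steers each sum away from $0$'' would need each tree edge to own a private, last-decided chord in its $S_t$, and such a system of distinct representatives is not available in general. As written, the existence of a good labelling is exactly equivalent to the theorem, so the proposal is circular at its core.

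The missing idea in Jaeger's proof is to use \emph{three} spanning trees rather than one, obtained from Nash--Williams--Tutte applied to a doubled graph. First reduce to $3$-edge-connected $G$ (contract or split off $2$-edge-cuts and handle them by the same gluing you already use for components). Then $2G$, the graph with every edge doubled, is $6$-edge-connected, hence contains three edge-disjoint spanning trees; since a tree uses at most one copy of each edge, these project to spanning trees $T_1,T_2,T_3$ of $G$ with every edge of $G$ lying outside at least one $T_i$. For each $i$ let $C_i$ be the sum (in the cycle space) of the fundamental cycles of all chords of $T_i$; then $C_i$ is an even subgraph containing every chord of $T_i$, so $C_1\cup C_2\cup C_3=E(G)$ and you are done by your own reduction. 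This replaces the delicate simultaneous labelling with a purely structural tree-packing statement, which is why no case analysis for snarks such as the Petersen graph is needed.
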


%
%
%

We will also need to recall a classical theorem of Nash-Williams and Tutte about disjoint spanning trees.

\begin{theorem}
\label{thm:NashWilliams} (\cite{Zhang1997}) Let $G$ be a graph and $k\geq 1$. Then $G$ contains $k$ edge-disjoint spanning trees, if and only if for any partition $P=(V_1,...,V_t)$ of $V(G)$, $|E_c(P)| \geq k(t-1)$. Here $E_c(P)$ denotes the set of edges of $G$ that connect two vertices that lie in different $V_i$s.
\end{theorem}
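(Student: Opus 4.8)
The plan is to prove both implications, with the reverse (sufficiency) direction carrying essentially all the weight. For the easy direction, suppose $G$ contains $k$ edge-disjoint spanning trees $T_1,\dots,T_k$ and fix a partition $P=(V_1,\dots,V_t)$. Contracting each class $V_i$ to a single vertex turns each $T_j$ into a connected spanning subgraph of a (pseudo-)graph on $t$ vertices, which must therefore use at least $t-1$ edges joining distinct classes; these are precisely the edges of $T_j$ lying in $E_c(P)$. Since the trees are pairwise edge-disjoint, summing over $j$ gives $|E_c(P)|\ge k(t-1)$.

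For the converse I would work inside the cycle (graphic) matroid $M=M(G)$, whose rank function is $r(A)=|V(G)|-c(V,A)$, where $c(V,A)$ is the number of components of the spanning subgraph $(V,A)$. The key observation is that a family of $k$ edge-disjoint spanning trees is exactly a packing of $k$ disjoint bases of $M$. By the matroid union theorem, the maximum size of a union of $k$ independent sets of $M$ equals $\min_{A\subseteq E}\bigl(|E\setminus A|+k\,r(A)\bigr)$, and since the value $k\,r(M)$ is an upper bound attained at $A=E$, such a packing exists iff this minimum equals $k\,r(M)=k(|V|-1)$ (here $G$ is connected, which the hypothesis entails, as partitioning along the components of a disconnected $G$ would force $E_c(P)=0$). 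Rewriting $r(A)=|V|-c(V,A)$, the condition becomes $|E\setminus A|\ge k\,(c(V,A)-1)$ for every $A\subseteq E$.

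The final step is to identify this family of edge-subset inequalities with the partition inequalities in the statement. In one direction, any partition $P$ yields the subset $A=\bigcup_i E(G[V_i])$ of edges internal to the classes, for which $E\setminus A=E_c(P)$ and $c(V,A)\ge t$, turning the matroid inequality into $|E_c(P)|\ge k(t-1)$. Conversely, given an arbitrary $A$, I would replace it by its matroid closure, namely the union of the induced subgraphs on the vertex sets of the components of $(V,A)$: this enlarges $A$, hence shrinks $|E\setminus A|$, while leaving $c(V,A)$ unchanged, so it suffices to verify the inequality on such closed sets. Each closed set is exactly the internal-edge set of the partition into its components, so the two conditions coincide.

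I expect the main obstacle to be the sufficiency direction, and within it the matroid union theorem: either I invoke it as a black box (the cleanest route) or reprove the relevant special case by the standard augmenting argument, in which one takes a maximum packing of $k$ forests and, if it fails to consist of spanning trees, follows a sequence of edge exchanges that exposes a partition violating $|E_c(P)|\ge k(t-1)$. The bookkeeping in that exchange argument—tracking which emerging class each endpoint falls into as the forests are modified—is the delicate part, and the matroid-theoretic packaging above is designed precisely to absorb it.
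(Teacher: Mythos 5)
This is a classical result (Nash--Williams/Tutte) that the paper does not prove at all: it is quoted with a citation to Zhang's book, so there is no in-paper argument to compare against. Your proposal is correct. The necessity direction (contract each class, count at least $t-1$ crossing edges per tree, sum over the $k$ edge-disjoint trees) is the standard one. For sufficiency you route through the matroid union theorem for the graphic matroid, and the reduction is sound: the rank formula gives the criterion $|E\setminus A|\ge k\,(c(V,A)-1)$ for all $A\subseteq E$, the size-counting argument correctly forces a union of size $k\,r(E)$ to consist of $k$ disjoint bases, and the closure step (pass to the union of induced subgraphs on the components of $(V,A)$, which can only shrink $|E\setminus A|$ while preserving $c(V,A)$) correctly identifies the closed sets with partitions, so the matroid inequalities and the partition inequalities coincide. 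Your observation that the hypothesis forces $G$ to be connected is also right. The one caveat is that the matroid union theorem is doing essentially all of the work and is itself of comparable depth to the theorem being proved; you acknowledge this explicitly and sketch the augmenting-exchange alternative, which is the honest way to present it. As a proof of the cited theorem, this is complete modulo that standard black box.
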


Below we prove two lemmas about nowhere zero $\mathbb{Z}_2^2$-flows of arbitrary $4$-edge-connected graphs. See exercises 3.13 and 3.14 from \cite{Zhang1997} for similar statements.

From now on, we denote by $\{x,y\}$ a set of generators of the group $\mathbb{Z}_2^2$ , while we denote by $\{x,y,z\}$ a set of generators of the group $\mathbb{Z}_2^3$.

\begin{lemma}
\label{lem:TwoEdgeEqual} 
Let $G$ be a $4$-edge-connected (pseudo)graph, and let $e$ and $f$ be two edges of $G$. Then $G$ admits a nowhere zero $\mathbb{Z}_2^2$-flow $\theta$, such that $\theta(e)=\theta(f)$.  
\end{lemma}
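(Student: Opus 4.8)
The plan is to realize a nowhere zero $\mathbb{Z}_2^2$-flow as a pair of even subgraphs and then arrange that both $e$ and $f$ land in the intersection of the two, which will force $\theta(e)=\theta(f)=x+y$. Recall the standard dictionary: giving a nowhere zero $\mathbb{Z}_2^2$-flow on $G$ is the same as choosing two even subgraphs $C_1,C_2$ of $G$ whose union is $E(G)$. Indeed, an edge receives the value $x$, $y$, or $x+y$ according to whether it lies in $C_1$ only, in $C_2$ only, or in both, and the covering condition $C_1\cup C_2=E(G)$ is exactly the requirement that no edge gets the value $0$.

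First I would produce two edge-disjoint spanning trees of $G$ that both avoid $e$ and $f$; equivalently, two edge-disjoint spanning trees of $G-e-f$. To this end I apply Theorem~\ref{thm:NashWilliams} with $k=2$ to $G-e-f$. The point is that $4$-edge-connectivity gives a comfortable margin: for any partition $P=(V_1,\dots,V_t)$ of $V(G)$ with $t\geq 2$, every class $V_i$ is a nonempty proper subset, so the cut $\partial_{G}(V_i)$ has at least four edges; summing over $i$ and dividing by two yields $|E_c(P)|\geq 2t$ in $G$. Deleting the two edges $e$ and $f$ lowers $|E_c(P)|$ by at most $2$, so in $G-e-f$ we still have $|E_c(P)|\geq 2t-2=2(t-1)$. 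Hence $G-e-f$ satisfies the Nash-Williams--Tutte condition and contains two edge-disjoint spanning trees $T_1$ and $T_2$, which are also spanning trees of $G$ with $e,f\notin E(T_1)\cup E(T_2)$.

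Next I would build the two even subgraphs from $T_1$ and $T_2$. Since $G$ is connected, for a fixed spanning tree $T$ the fundamental cycles of the cotree edges form a basis of the cycle space, and each such fundamental cycle meets the cotree $E(G)\setminus E(T)$ in exactly its own defining edge. Taking the symmetric difference of all fundamental cycles with respect to $T_1$ therefore yields an even subgraph $C_1$ containing the whole cotree $E(G)\setminus E(T_1)$, and one defines $C_2$ analogously from $T_2$. Because $T_1$ and $T_2$ are edge-disjoint, $(E(G)\setminus E(T_1))\cup(E(G)\setminus E(T_2))=E(G)\setminus(E(T_1)\cap E(T_2))=E(G)$, so $C_1\cup C_2=E(G)$ and the associated $\mathbb{Z}_2^2$-flow $\theta$ is nowhere zero.

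Finally, since $e$ and $f$ lie in neither tree, they belong to both cotrees, hence to both $C_1$ and $C_2$; thus $\theta(e)=\theta(f)=x+y$, as required. I expect the crux to be the first step: the whole argument hinges on the observation that the slack in the $4$-edge-connectivity bound ($|E_c(P)|\geq 2t$, rather than the bare $2(t-1)$ needed for two spanning trees) is exactly enough to survive the deletion of the two prescribed edges. Once $e$ and $f$ have been pushed into both cotrees, the even-subgraph construction and the concluding verification are routine, and minor cases such as $e=f$ or loops only improve the inequality.
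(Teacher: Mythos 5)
Your proposal is correct and follows essentially the same route as the paper: the same Nash--Williams--Tutte count showing $G-e-f$ has two edge-disjoint spanning trees, and the same construction of the nowhere zero $\mathbb{Z}_2^2$-flow from the fundamental cycles of the two cotrees, forcing $\theta(e)=\theta(f)=x+y$. The only cosmetic difference is that the paper dismisses the loop case up front rather than folding it into the main argument.
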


\begin{proof} We will assume that $e$ and $f$ are not loops, otherwise the statement is trivial since the flow value of a loop can be arbitrarily chosen in $\{x,y,x+y\}$. Consider the graph $G-e-f$. Let us show that it has two edge-disjoint spanning trees. We will use Theorem \ref{thm:NashWilliams}. Consider any partition $P=(V_1,...,V_t)$ of $V(G)$. Let us count the number of edges crossing the sets $V_i$s, that is $|E_c(P)|$. Since $G$ is $4$-edge-connected, any fixed $V_i$ is connected with the rest of the graph $G$ with at least four edges. At most two of these edges can be $e$ and $f$, therefore
\[|E_c(P)|\geq \frac{4t}{2} - 2 =2t-2=2(t-1).\]
Thus by Theorem \ref{thm:NashWilliams}, $G-e-f$ has two edge-disjoint spanning trees, say $T_1$ and $T_2$. Clearly, $T_1$ and $T_2$ are also disjoint spanning trees of $G$. For every edge $g \in E(G)\setminus E(T_1)$, we denote by $C^1_g$ the unique cycle in $T_1+g$. Analogously, we denote by $C^2_h$ the unique cycle in $T_2+h$ for every $h \in E(G) \setminus E(T_2)$. We construct a nowhere-zero $\mathbb{Z}^2_2$-flow $\theta$ of $G$ by adding $x$ on each edge of the cycles $C^1_g$, where $g \notin E(T_1)$ and by adding $y$ on each edge of the cycles $C^2_h$, where $h \notin E(T_2)$. 
Hence, both edges $e$ and $f$ receive value $x+y$ in $\theta$, since both of them belong neither to $E(T_1)$ nor to $E(T_2)$.     
%
%
%
%
%
%
%
\end{proof}

\begin{lemma}
\label{lem:3edgesAdjacent4connected} Let $G$ be a $4$-edge-connected (pseudo)graph, and let $e,f,g$ be three
edges incident to some vertex $v$ of $G$. Then $G$ has a nowhere zero
$\mathbb{Z}_2^2$-flow $\theta$, such that $\theta(e)\neq \theta(f)$ and $\theta(e)\neq \theta(g)$.
\end{lemma}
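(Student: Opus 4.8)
The plan is to translate the statement into the language of even subgraphs and then realise the desired flow by a construction in the spirit of Lemma \ref{lem:TwoEdgeEqual}. Writing a $\mathbb{Z}_2^2$-flow as $\theta=\alpha\,x+\beta\,y$ with $\alpha,\beta\colon E(G)\to\mathbb{Z}_2$, flow conservation for $\theta$ is equivalent to $A:=\mathrm{supp}(\alpha)$ and $B:=\mathrm{supp}(\beta)$ both being even subgraphs, and $\theta$ is nowhere zero exactly when $A\cup B=E(G)$. Since $\theta(e)\neq\theta(f)$ already holds as soon as $e$ and $f$ differ in a single coordinate, it suffices to produce an even subgraph $A$ with $e\in A$ but $f,g\notin A$, together with an even subgraph $B$ satisfying $A\cup B=E(G)$: then the $x$-coordinate alone separates $e$ from both $f$ and $g$. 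As in Lemma \ref{lem:TwoEdgeEqual}, if any of $e,f,g$ is a loop the statement is immediate, so I assume they are genuine edges.

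First I would build $A$. Delete $f$ and $g$ and consider $H:=G-f-g$. Repeating the counting argument of Lemma \ref{lem:TwoEdgeEqual}, every partition $P=(V_1,\dots,V_t)$ of $V(G)$ satisfies $|E_c(P)|\geq \frac{4t}{2}-2=2(t-1)$ in $H$, so by Theorem \ref{thm:NashWilliams} the graph $H$ has two edge-disjoint spanning trees $T_1,T_2$. Because $T_1$ and $T_2$ are edge-disjoint, the edge $e$ lies in at most one of them, so after possibly swapping their names I may assume $e\notin T_1$. Let $A$ be the even subgraph obtained, exactly as in Lemma \ref{lem:TwoEdgeEqual}, by summing the fundamental cycles $C^1_c$ of all $T_1$-cotree edges $c$; equivalently, $A$ is the unique even subgraph of $H$ containing every edge outside $T_1$. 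Then $e\in A$ (as $e\notin T_1$), while $f,g\notin A$ since $A\subseteq E(H)$. Moreover $A$ contains the whole spanning tree $T_2$, hence $A$ is a \emph{connected, spanning} even subgraph of $G$; this last property is exactly what I will need in order to complete the flow.

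It then remains to find an even subgraph $B$ with $A\cup B=E(G)$, i.e.\ to enlarge $E(G)\setminus A$ to an even subgraph using only edges of $A$. The vertices of odd degree in $E(G)\setminus A$ are precisely the odd-degree vertices $T$ of $G$ (because $A$ is even), and $|T|$ is even by the handshake lemma. Since $A$ is connected and spanning, it contains a $T$-join $A'$, and setting $B:=(E(G)\setminus A)\cup A'$ makes every degree of $B$ even while guaranteeing $A\cup B=E(G)$. Finally I define $\theta=\mathbf{1}_A\,x+\mathbf{1}_B\,y$, which is a nowhere-zero $\mathbb{Z}_2^2$-flow in which $\theta(e)$ has $x$-coordinate $1$ whereas $\theta(f)$ and $\theta(g)$ have $x$-coordinate $0$, so $\theta(e)\neq\theta(f)$ and $\theta(e)\neq\theta(g)$. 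I expect the only genuinely delicate point to be this completion step: an arbitrary even subgraph $A$ with $e\in A$ and $f,g\notin A$ need not extend to a nowhere-zero flow, and the argument goes through precisely because the two-spanning-tree construction forces $A$ to contain a spanning tree, hence to be connected and spanning, which is what makes the $T$-join available.
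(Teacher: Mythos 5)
Your proof is correct, and it shares the paper's overall strategy---reduce the claim to producing two even subgraphs whose union is $E(G)$, one of which separates $e$ from $\{f,g\}$ in a single coordinate, and extract them from two edge-disjoint spanning trees obtained via Theorem \ref{thm:NashWilliams} after deleting two of the three edges---but the execution differs at both choice points. The paper deletes $e$ and $f$ (so it can reuse verbatim the tree pair from Lemma \ref{lem:TwoEdgeEqual}), observes that $g$ misses one of the trees, and builds the two even subgraphs as complements of parity subgraphs, correcting one of them by the fundamental cycle of $e$ in $T_2+e$ so that $e$ drops out while $f,g$ stay in; the union property then comes for free from the disjointness of $T_1$ and $T_2\cup\{e\}$. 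You instead delete $f$ and $g$, take for $A$ the sum of the fundamental cycles of all $T_1$-cotree edges of $H=G-f-g$ (which automatically contains $e$ and excludes $f,g$), and complete the flow by adjoining to $E(G)\setminus A$ a $T$-join contained in $A$. That completion is your genuinely new ingredient, and you correctly isolate why it works: $A\supseteq T_2$ is connected and spanning, so the $T$-join is available. What your route buys is a cleaner second step that avoids parity subgraphs altogether; what the paper's route buys is that the tree pair and the covering argument are literally inherited from the previous lemma. One point worth making explicit in your loop case: before reassigning a loop's value you still need \emph{some} nowhere-zero $\mathbb{Z}_2^2$-flow to exist, but this follows from the same two-spanning-tree construction applied to $G$ itself, so the dismissal is harmless (the paper makes the same one). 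Neither argument ever uses that $e,f,g$ share a vertex.
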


\begin{proof} 
We construct a nowhere zero $\mathbb{Z}_2^2$-flow arising from two disjoint even subgraphs of $G$ in the standard way (see Theorem 3.2.4 in \cite{Zhang1997}). One can easily see that if one of the even subgraphs does not contain $e$ and does contain $f,g$, then the obtained flow meets our constraints. Now, we construct two even subgraphs $P_1$ and $P_2$ which satisfy such a condition.

Firstly, we assume that none of $e,f,g$ is a loop, as otherwise the statement of the lemma is trivial. 
From the proof of the previous lemma, we have that $G-e-f$ has two edge-disjoint spanning trees, say $T_1$ and $T_2$, and  without loss of generality we can assume $g \notin T_2$.

Since a spanning tree of a graph contains a parity subgraph of the graph (see Lemma 3.2.8 in \cite{Zhang1997}), we can choose two parity subgraphs of $G$, say $A_1$ and $A_2$, contained in $T_1$ and $T_2$, respectively. Let $C$ be the unique cycle in the subgraph $T_2 \cup \{e\}$. It is straightforward that $e \in C$. Denote by $P_1$ the even subgraph of $G$ which is the complement of $A_1$ and by $P_2$ the even subgraph of $G$ which is the complement of the parity subgraph $A_2 \bigtriangleup C$. Since $e \in C$ and $e \notin  A_2$, it follows that $e$ does not belong to $P_2$. On the other hand, $f,g$ do not belong to $T_2 \cup {e}$ hence they belong to $P_2$.
\end{proof}

\begin{corollary}
\label{cor:TwoNotEqualedges} Let $G$ be a $4$-edge-connected (pseudo)graph, and let $e$ and $f$ be two edges incident to the same vertex $v$. Then $G$ has a nowhere zero $\mathbb{Z}_2^2$-flow $\theta$, such that $\theta(e)\neq \theta(f)$.  
\end{corollary}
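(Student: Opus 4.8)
The plan is to derive this corollary directly from Lemma \ref{lem:3edgesAdjacent4connected}, which is the immediately preceding and stronger statement. The key observation is that the corollary only requires a single inequality $\theta(e)\neq\theta(f)$ for two edges incident to a common vertex $v$, whereas the lemma guarantees two simultaneous inequalities for three edges incident to a common vertex. So the natural strategy is to produce a suitable third edge at $v$ and invoke the lemma.

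First I would handle the degenerate possibility that $e$ or $f$ is a loop, exactly as in the proofs above: if either edge is a loop, the flow value on a loop can be chosen freely among the three nonzero elements $\{x,y,x+y\}$ of $\mathbb{Z}_2^2$, so after fixing any nowhere-zero flow (which exists by the $4$-edge-connectivity and Theorem \ref{thm:Jaeger8flow}, noting $4$-edge-connected implies bridgeless) we may simply set the loop's value to differ from the other edge's value. Hence I may assume neither $e$ nor $f$ is a loop.

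In the main case, since $G$ is $4$-edge-connected the vertex $v$ has degree at least four, so besides $e$ and $f$ there exists at least one further edge $g$ incident to $v$. I would then apply Lemma \ref{lem:3edgesAdjacent4connected} to the triple $e,f,g$ at $v$ to obtain a nowhere-zero $\mathbb{Z}_2^2$-flow $\theta$ satisfying $\theta(e)\neq\theta(f)$ and $\theta(e)\neq\theta(g)$. The first of these inequalities is precisely what the corollary asserts, so $\theta$ is the desired flow and the proof concludes.

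I do not expect a genuine obstacle here, since the corollary is strictly weaker than the lemma it follows; the only points requiring a moment of care are confirming that a third edge at $v$ actually exists (guaranteed by $\deg(v)\geq 4$ from $4$-edge-connectivity) and disposing of the loop cases cleanly. One could alternatively prove the corollary from scratch by mimicking the two-even-subgraph construction of the lemma, but reusing the lemma is far shorter and avoids repeating the parity-subgraph argument.
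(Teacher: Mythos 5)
Your derivation is exactly the paper's intended one: the corollary is stated without proof as an immediate consequence of Lemma \ref{lem:3edgesAdjacent4connected}, using that $4$-edge-connectivity forces $\deg(v)\ge 4$ and hence the existence of a third edge $g$ at $v$. The only nit is in your loop case: Theorem \ref{thm:Jaeger8flow} produces a nowhere zero $\mathbb{Z}_2^3$-flow rather than a $\mathbb{Z}_2^2$-flow, so you should instead obtain the base nowhere zero $\mathbb{Z}_2^2$-flow from Lemma \ref{lem:TwoEdgeEqual} (or its two-edge-disjoint-spanning-trees argument) before freely reassigning the loop's value.
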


\section{The Main Result}
\label{sec:Main}

In this section we present our main result. Conjecture \ref{conj:5NormalConj} states that $\chi'_{N}(G)\leq 5$ for any bridgeless cubic graph. Combined with Proposition \ref{prop:JaegerNormalColor} and the fact that any cubic graph admitting a $P_{10}$-coloring, has to be bridgeless, we have that if $G$ is a cubic graph with a bridge, then $\chi'_{N}(G)\geq 6$. The following theorem presents a way to construct infinitely many cubic graphs containing bridges, such that $\chi'_{N}(G)\geq 7$.

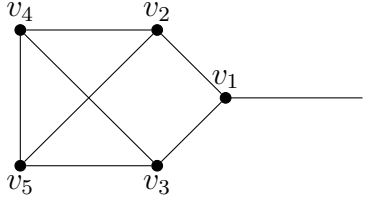
\begin{figure}[!htbp]
\begin{center}

\begin{tikzpicture}[scale=0.45]


 \node at (0, 0.55) {$v_1$};
   \node at (-2, 2.55) {$v_2$};
   \node at (-2, -2.55) {$v_3$};
   \node at (-6, 2.55) {$v_4$};
    \node at (-6, -2.55) {$v_5$};

 \tikzstyle{every node}=[circle, draw, fill=black!,
                        inner sep=0pt, minimum width=4pt]
    \node at (0,0) (n1) {};


   \node at (-2,2) (n2) {};

   
    \node at (-2,-2) (n3) {};

    
     \node at (-6,2) (n4) {};
    
   
     \node at (-6,-2) (n5) {};

 \draw (n1)--(n2);
 \draw (n1)--(n3);
 
 \draw (n2)--(n4);
 \draw (n3)--(n4);
 \draw (n3)--(n5);
 \draw (n2)--(n5);
 
 \draw (n4)--(n5);
 
 \draw (0,0) -- (4,0);

\end{tikzpicture}
\end{center}
\caption{A subgraph in a cubic graph that requires $7$ colors in a normal coloring.} \label{fig:exam}
\end{figure}

\begin{theorem}\label{thm:7examplethm} Let $K$ be the graph obtained from $K_4$ by subdividing one of its edges once (Figure \ref{fig:exam}). Then for any cubic graph $G$ containing $K$ as a subgraph, one has 
\begin{enumerate}[(a)]
    \item in any normal edge-coloring of $G$, the edges of $K$ are rich,

    \item in any normal edge-coloring of $G$, the edges of $K$ are colored with pairwise different colors,
    
    \item $\chi'_{N}(G)\geq 7$,
    
    \item in any normal $7$-edge-coloring of $G$, the colors of $v_4v_5$ and the bridge incident to $v_1$ have to be the same.
\end{enumerate}
\end{theorem}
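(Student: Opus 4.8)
The plan is to analyze directly the local constraints that normality imposes on the color sets $S_c(v_1),\dots,S_c(v_5)$. Write $c(v_2v_4)=p$, $c(v_3v_4)=q$, $c(v_4v_5)=r$, $c(v_2v_5)=s$, $c(v_3v_5)=t$, $c(v_1v_2)=u$, $c(v_1v_3)=w$, and let $\beta$ denote the color of the bridge at $v_1$; properness at the five vertices already forces, for instance, $p,q,r$ distinct, $s,t,r$ distinct, $u\ne w$, and so on. The one tool I would use throughout is the following trichotomy: since $c(xy)\in S_c(x)\cap S_c(y)$ for every edge $xy$, we have $|S_c(x)\cup S_c(y)|=6-|S_c(x)\cap S_c(y)|$, so $xy$ is poor iff $S_c(x)=S_c(y)$, rich iff $S_c(x)\cap S_c(y)=\{c(xy)\}$, and an intersection of size exactly $2$ is forbidden in a normal coloring.

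For (a) and (b) I would argue in three waves. First, I claim $v_4v_5$ is rich: if it were poor, then $\{p,q\}=\{s,t\}$, and since $p\ne s$ (both edges meet $v_2$) this forces $p=t$ and $q=s$; substituting into the edges $v_2v_4$ and $v_3v_4$, the only way to avoid a forbidden size-$2$ intersection is $u=r$ and $w=r$, contradicting $u\ne w$ at $v_1$. Richness of $v_4v_5$ yields that $p,q,r,s,t$ are five distinct colors. Second, applying the same size-$2$ exclusion to the four edges $v_2v_4,v_2v_5,v_3v_4,v_3v_5$ pushes the two leg colors outside this palette, giving $u\notin\{p,q,r,s,t\}$ and $w\notin\{p,q,r,s,t\}$; together with $u\ne w$ this exhibits seven pairwise distinct colors on $E(K)$, which is (b). Third, with these exclusions every one of the seven edges has a singleton intersection of color sets, hence all are rich, which is (a). Part (c) is then immediate, since (b) already displays seven edges with pairwise distinct colors, so $\chi'_N(G)\ge 7$.

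For (d) I would specialize to a normal $7$-edge-coloring, where by (b) the edges of $K$ exhaust all seven colors and the bridge color $\beta$ must coincide with one of them. Richness of $v_1v_2$ forces $S_c(v_1)\cap S_c(v_2)=\{u\}$, i.e. $\{w,\beta\}\cap\{p,s\}=\emptyset$, so $\beta\notin\{p,s\}$ (as $w\notin\{p,s\}$); richness of $v_1v_3$ symmetrically gives $\beta\notin\{q,t\}$; and properness at $v_1$ gives $\beta\notin\{u,w\}$. The only remaining color is $r=c(v_4v_5)$, so $\beta=r$, which is exactly (d).

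I expect the main obstacle to be the bookkeeping of the first two waves, namely ruling out the poor case for $v_4v_5$ and, above all, systematically excluding the forbidden size-$2$ intersections for each of the six triangle edges. No single step is deep, so the real work lies in presenting this finite case analysis cleanly rather than in any isolated difficulty.
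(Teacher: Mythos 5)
Your proposal is correct and follows essentially the same route as the paper: a direct, elementary case analysis of the local constraints on the seven edges of $K$, merely organized in a different order (you establish richness of $v_4v_5$ and the pairwise-distinctness of colors first and then read off (a), whereas the paper proves each edge rich in turn and treats (b)--(d) as a direct check). The only point to make explicit in your third wave is that for $v_1v_2$ and $v_1v_3$ the singleton intersection is not forced by the exclusions alone (the bridge color $\beta$ is not yet controlled) but by one more application of your trichotomy: the intersection contains $u$, cannot have size $3$ since $w\notin\{p,s\}$, and size $2$ is forbidden by normality.
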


\begin{proof} Observe that it suffices to prove only the statement (a). The other three statements follow easily by a direct check.

Let $c$ be any normal edge-coloring of $G$. Let us show that the edge $v_2v_5$ is rich. Assume that it is poor. Without loss of generality we can assume that $c(v_2v_5)=1$, $c(v_2v_4)=2$ and $c(v_1v_2)=3$. Since $c$ is an edge-coloring, we have $c(v_2v_4)\neq c(v_4v_5)$, hence $c(v_4v_5)=3$ and $c(v_3v_5)=2$. Since $c(v_3v_5)=c(v_2v_4)$, we have that the edge $v_3v_4$ is poor, too. Hence $3=c(v_4v_5)=c(v_1v_3)$, which is a contradiction that $c$ is an edge-coloring. 

By symmetry of $K$ the edges $v_2v_4$, $v_3v_4$ and $v_3v_5$ are also rich with respect to $c$.

Now, let us show that the edge $v_4v_5$ is also rich. Assume that it is poor. Without loss of generality, we can assume that $c(v_4v_5)=1$, $c(v_3v_4)=2$ and $c(v_2v_4)=3$. Since $c(v_2v_5)\neq c(v_2v_4)$, we have $c(v_2v_5)=2$ and $c(v_3v_5)=3$. Consider the edge $v_2v_4$. Observe that it is adjacent to two edges of color $2$, hence it should be poor, which is a contradiction.

Finally, let us show that the edge $v_1v_2$ has to be rich. Again assume that it is poor. Without loss of generality, we can assume that $c(v_1v_2)=1$ and $c(v_1v_3)=2$. Observe that one of edges $v_2v_4$ or $v_2v_5$ has to have color $2$. If $c(v_2v_4)=2$, then the edge $v_3v_4$ is poor, which is a contradiction. On the other hand, if $c(v_2v_5)=2$, then the edge $v_3v_5$ is poor, which is a contradiction.
Again by symmetry of $K$ we have that $v_1v_3$ is also rich with respect to $c$, and the assertion follows.
\end{proof}

We now proceed with showing that $\chi'_{N}(G)\leq 7$ for any simple cubic graph $G$. Observe that combined with the previous theorem, we will have that the upper bound seven is best-possible. First we recall a proof of this bound for bridgeless cubic graphs, which is an easy application of Jaeger's $8$-flow theorem (Theorem \ref{thm:Jaeger8flow}). Let us note that this proof has been already proposed in \cite{Bilkova12}. (See also Theorem 1.1 in \cite{HolySkoJCTB2004}). We start with the following easy remark:

\begin{remark}
\label{rem:Normal} Let $G$ be a cubic graph. If $c$ is an edge-coloring of $G$, such that $c(e_3)$ is uniquely determined by $c(e_1)$ and $c(e_2)$, then $c$ is a normal edge-coloring. Here $e_1, e_2, e_3$ are the three edges of $G$ incident to the same vertex $v$.
\end{remark}

\begin{theorem}\label{thm:7bridgeless} If $G$ is a bridgeless cubic graph, then $\chi'_{N}(G)\leq 7$. 
\end{theorem}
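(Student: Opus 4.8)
The plan is to apply Jaeger's $8$-flow theorem (Theorem \ref{thm:Jaeger8flow}), which guarantees that a bridgeless graph admits a nowhere-zero $\mathbb{Z}_2^3$-flow $\theta$, and then to convert this flow into a normal edge-coloring. Since $G$ is cubic, at every vertex $v$ the three edges carry flow values $a_1,a_2,a_3 \in \mathbb{Z}_2^3 \setminus \{0\}$, and the flow-conservation condition at $v$ (with all edges oriented, say, outward, which is legitimate over $\mathbb{Z}_2^3$ since $-1=1$) forces $a_1+a_2+a_3=0$. The key consequence is that any two of these three values determine the third uniquely, namely $a_3 = a_1+a_2$. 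This is exactly the hypothesis of Remark \ref{rem:Normal}, so once I exhibit an edge-coloring in which the color of each edge is recorded by a pair (or triple) that is locally additively consistent, normality will be automatic.

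The concrete step is to define the coloring directly from the flow. I would set the color of an edge $e$ to be the pair $(\theta_2(e), \theta(e))$, where $\theta$ is the nowhere-zero $\mathbb{Z}_2^3$-flow and $\theta_2$ is some auxiliary $\mathbb{Z}_2^3$-valued (or $\mathbb{Z}_2^2$-valued) flow used to break the ties that a single $\mathbb{Z}_2^3$-flow cannot resolve. The reason a single flow is not enough is that although the three flow values at a vertex sum to zero, two edges at a vertex may still carry the \emph{same} nonzero value of $\theta$ (for instance $a_1=a_2$ forces $a_3=0$, which is forbidden, so in fact $a_1,a_2,a_3$ are pairwise distinct at each vertex); this distinctness is precisely what guarantees that $\theta$ itself is already a proper edge-coloring. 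Thus I would first check that the map $e \mapsto \theta(e)$ is a proper edge-coloring with the seven colors $\mathbb{Z}_2^3 \setminus \{0\}$, and then observe that the relation $a_3=a_1+a_2$ makes the third color a function of the other two, invoking Remark \ref{rem:Normal} to conclude the coloring is normal.

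So the heart of the argument is short: by the $8$-flow theorem take a nowhere-zero $\mathbb{Z}_2^3$-flow $\theta$; the seven nonzero elements of $\mathbb{Z}_2^3$ serve as the seven colors; at each vertex the three incident values are distinct and sum to zero, which simultaneously shows the coloring is proper (distinctness) and that each color is determined by the other two at a vertex (the sum condition), so Remark \ref{rem:Normal} yields normality. Hence $\chi'_{N}(G)\leq 7$.

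The only point requiring care, and what I expect to be the main obstacle, is justifying that the flow values at each vertex are pairwise distinct and that the local additive relation really matches the hypothesis of Remark \ref{rem:Normal}: I must confirm that the conservation law over $\mathbb{Z}_2^3$, together with the nowhere-zero condition, truly forbids two incident edges from sharing a value (the failure case $a_1=a_2$ would force the third value to vanish), so that $\theta$ is both a proper coloring and a normal one. Once that small verification is in place, no nontrivial case analysis remains and the bound $\chi'_{N}(G)\leq 7$ follows immediately.
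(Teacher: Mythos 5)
Your proof is correct and is essentially the same as the paper's: apply Jaeger's $8$-flow theorem, use the seven nonzero elements of $\mathbb{Z}_2^3$ as colors, and note that the conservation law at each vertex makes the three incident values pairwise distinct and each determined by the other two, so Remark \ref{rem:Normal} gives normality. The digression about an auxiliary flow $\theta_2$ is unnecessary, as you yourself conclude.
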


\begin{proof} By Theorem \ref{thm:Jaeger8flow}, $G$ admits a nowhere-zero $\mathbb{Z}^3_{2}$-flow $\phi$. Let $e_1, e_2, e_3$ be three edges of $G$ incident to the same vertex $v$. It is easy to see that the values of $\phi$ on any two of $e_1, e_2, e_3$ uniquely determine the the value of $\phi$ on the third one. Thus, $\phi$ is a normal $7$-edge-coloring thanks to Remark \ref{rem:Normal}.
\end{proof}

%
%

Observe that the proof of the previous theorem suggests that any nowhere zero $\mathbb{Z}^3_{2}$-flow of the bridgeless cubic graph $G$ gives rise to a normal $7$-edge-coloring of $G$. If an edge is rich or poor in this coloring, we will simply say that this edge is rich or poor, respectively, in the corresponding nowhere zero $\mathbb{Z}^3_{2}$-flow. Our next result states that one can make an arbitrary fixed edge of a bridgeless cubic graph poor in a nowhere zero $\mathbb{Z}_2^3$-flow.

In the proof, and in the rest of the paper, we will use several times the following standard operations on cubic graphs.

\begin{itemize}
\item Given two cubic graphs $G_1$ and $G_2$ and two edges $x_1y_1$ in $G_1$ and $x_2y_2$ in $G_2$, the {\bf 2-cut-connection} of $(G_1,x_1,y_1)$ and $(G_2,x_2,y_2)$ is the graph obtained from $G_1$ and $G_2$ by removing edges $x_1y_1$ and $x_2y_2$, and connecting $x_1$ and $y_1$ by a new edge, and $x_2$ and $y_2$ by another new edge.
On the other hand, if a cubic graph $G$ has a $2$-edge-cut $C$, we refer to $G_1$ and $G_2$ as the graphs obtained from $G$ by the {\bf 2-cut reduction of $C$}.

\item Given two cubic graphs $G_1$ and $G_2$ and two vertices $u_1$ of $G_1$ and $u_2$ of $G_2$, a {\bf star product} of $(G_1,u_1)$ and $(G_2,u_2)$ is a cubic 
graph obtained from $G_1$ and $G_2$ by removing vertices $u_1$ and $u_2$, and
connecting the three neighbors of $u_1$ in $G_1$ to the three neighbors of $u_2$ in $G_2$ with three new independent edges.
On the other hand, if a cubic graph $G$ has a non-trivial $3$-edge-cut $C$ we refer to $G_1$ and $G_2$ as the graphs obtained from $G$ by a {\bf 3-cut reduction of $C$}.
\end{itemize}

\begin{remark}\label{rem:cutreduction}
In what follows, with a slightly abuse of terminology, we will always consider an edge of $G$ not in $C$ also as an edge of either $E(G_1)$ or $E(G_2)$. While we will refer to the other edges of $G_1$ and $G_2$ as the edges which arise from $C$. 
\end{remark}

Moreover, the following refinement of Petersen Theorem for perfect matchings in cubic graphs will be used in the proof of next two lemmas.

\begin{theorem}\label{PetersenTheorem}(\cite{Sch}) Any edge of a bridgeless cubic graph $G$ lies in a perfect matching of $G$.
\end{theorem}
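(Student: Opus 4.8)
The plan is to reduce the claim to an application of Tutte's $1$-factor theorem together with a parity-and-counting argument exploiting that $G$ is cubic and bridgeless. Write $e = xy$. A perfect matching of $G$ containing $e$ is exactly a perfect matching in which $x$ and $y$ are matched to each other; equivalently, it amounts to a perfect matching of the graph $G - x - y$ obtained by deleting the two endpoints of $e$, to which we then add back the edge $e$. So the whole statement reduces to proving that $G - x - y$ has a perfect matching.

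First I would suppose, for contradiction, that $G - x - y$ has no perfect matching. By Tutte's $1$-factor theorem there is a set $S \subseteq V(G) \setminus \{x, y\}$ with $o\big((G - x - y) - S\big) > |S|$, where $o(\cdot)$ denotes the number of odd components. Set $T := S \cup \{x, y\}$, so that $(G - x - y) - S = G - T$ and $|T| = |S| + 2$. Since $G$ is cubic, $|V(G)|$ is even, and a standard parity check gives $o(G - T) \equiv |V(G)| - |S| - 2 \equiv |S| \pmod 2$; hence the strict inequality improves to $o(G - T) \ge |S| + 2 = |T|$.

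The heart of the argument is an edge count between $T$ and the components of $G - T$. For any odd component $C$ of $G - T$, the number of edges joining $C$ to $T$ is odd, since its parity equals that of the degree sum $3|C|$ inside $C$, and it is at least $2$, since otherwise a single such edge would be a bridge of $G$ (all edges leaving $C$ land in $T$). Being odd and at least $2$, it is in fact at least $3$. Summing over the at least $|T|$ odd components, at least $3|T|$ edges leave $T$. On the other hand the number of edges leaving $T$ equals $3|T| - 2\,e(T)$, where $e(T)$ is the number of edges with both ends in $T$, and so is at most $3|T|$. Comparing the two bounds forces every inequality to be tight, and in particular $e(T) = 0$, i.e.\ $T$ spans no edge of $G$. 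But $x, y \in T$ and $e = xy$ is an edge of $G$, so $e$ is an edge spanned by $T$, a contradiction. Hence $G - x - y$ has a perfect matching, and adjoining $e$ yields a perfect matching of $G$ through $e$.

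I expect the only delicate point to be the ``odd and $\ge 2$, hence $\ge 3$'' step, which is precisely where bridgelessness enters and which drives the counting to tightness; everything else is bookkeeping. One mild caveat is that Tutte's $1$-factor theorem is not among the results recalled in the excerpt, so in a final write-up I would either cite it explicitly or observe that the counting above is essentially the proof of Petersen's theorem itself, here localized around the prescribed edge $e$.
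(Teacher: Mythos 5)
Your argument is correct, and it is the standard proof of this classical result (due to Sch\"onberger, refining Petersen's theorem). Note, however, that the paper does not prove this statement at all: it is quoted as Theorem~\ref{PetersenTheorem} with a citation to \cite{Sch}, so there is no internal proof to compare against. Your reduction to a perfect matching of $G-x-y$, the application of Tutte's $1$-factor theorem, the parity upgrade $o(G-T)\ge |T|$, and the count showing each odd component sends an odd number $\ge 3$ of edges into $T$ while $T$ emits at most $3|T|-2e(T)$ edges, forcing $e(T)=0$ and contradicting $xy\in E(G)$, are all sound; bridgelessness enters exactly where you say it does (excluding a single edge leaving an odd component), and the case of zero edges leaving is already excluded by parity. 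As you anticipate, in a final write-up you would need to cite Tutte's theorem explicitly, since it is not among the results recalled in the paper.
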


Finally, we will also make use several times of some properties of the automorphism group of the elementary abelian group $\mathbb{Z}_2^3$. In particular, we need to use the following standard remark.
\begin{remark}\label{rem:automorphisms}
If $S_1$ and $S_2$ are sets of generators of $\mathbb{Z}_2^3$ of cardinality three, then any bijective map from $S_1$ to $S_2$ can be uniquely extended to an automorphism of $\mathbb{Z}_2^3$.   
\end{remark}

\begin{lemma}
\label{lem:1edgePoor} Let $G$ be a bridgeless cubic graph, and $e$ be a prescribed edge. Then there is a nowhere zero $\mathbb{Z}_2^3$-flow $\theta$, such that $e$ is poor in $\theta$.
\end{lemma}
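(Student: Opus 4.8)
The plan is to translate ``$e$ is poor'' into a local condition on the flow values around $e$, and then realize that condition by a single surgery that reduces the lemma to producing a nowhere-zero $\mathbb{Z}_2^3$-flow on a smaller graph in which two prescribed edges receive \emph{different} values. Write $e=uv$, let $p,q$ be the other two edges at $u$ and $r,s$ the other two edges at $v$. Using the interpretation of a nowhere-zero $\mathbb{Z}_2^3$-flow as a $7$-edge-coloring from the proof of Theorem~\ref{thm:7bridgeless}, the three values at each vertex are distinct, so $S_\theta(u)=\{\theta(e),\theta(p),\theta(q)\}$ and $S_\theta(v)=\{\theta(e),\theta(r),\theta(s)\}$ always share $\theta(e)$. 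Hence $e$ is poor exactly when $S_\theta(u)=S_\theta(v)$, i.e. $\{\theta(p),\theta(q)\}=\{\theta(r),\theta(s)\}$; since $p,q$ are adjacent we have $\theta(p)\neq\theta(q)$, so this amounts (after relabelling) to $\theta(p)=\theta(r)$ and $\theta(q)=\theta(s)$. Thus it suffices to build a nowhere-zero $\mathbb{Z}_2^3$-flow of $G$ with these two equalities.

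I would first dispose of the $3$-edge-colorable case, which is immediate: a proper $3$-edge-coloring is a nowhere-zero $\mathbb{Z}_2^2$-flow, which embeds into $\mathbb{Z}_2^3$ by using only the values $\{x,y,x+y\}$, and in it both endpoints of every edge see the same colour set $\{x,y,x+y\}$, so \emph{every} edge, in particular $e$, is poor. Therefore only non-$3$-edge-colorable graphs remain, and for these I turn to the surgery.

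Let $u_1,u_2$ be the neighbours of $u$ along $p,q$ and $v_1,v_2$ the neighbours of $v$ along $r,s$. Form $G'$ from $G$ by deleting $u,v$ and $e$ and adding two new edges $\gamma=u_1v_1$ and $\delta=u_2v_2$ (degenerate coincidences among $u_1,u_2,v_1,v_2$ create loops or parallel edges and are handled exactly as in the earlier lemmas); then $G'$ is again cubic. Any nowhere-zero $\mathbb{Z}_2^3$-flow $\theta'$ of $G'$ with $\theta'(\gamma)\neq\theta'(\delta)$ lifts to $G$: keep $\theta'$ on the common part $G-u-v$, put $\theta(p)=\theta(r)=\theta'(\gamma)$, $\theta(q)=\theta(s)=\theta'(\delta)$, and $\theta(e)=\theta'(\gamma)+\theta'(\delta)$. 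Conservation at $u_1,v_1,u_2,v_2$ is unchanged, conservation at $u$ and $v$ is a direct check, and the value on $e$ is nonzero precisely because $\theta'(\gamma)\neq\theta'(\delta)$; by the criterion above $e$ is then poor. So the lemma reduces to the single statement: \emph{$G'$ admits a nowhere-zero $\mathbb{Z}_2^3$-flow in which $\gamma$ and $\delta$ receive distinct values.}

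The remaining, and genuinely hard, task is this reduced statement, and it splits into two obstacles. The value $\theta'(\gamma)=\theta'(\delta)$ is forced in every flow exactly when $\{\gamma,\delta\}$ is a $2$-edge-cut of $G'$, equivalently when $G-u-v$ is disconnected, so I must first guarantee both that $G'$ is bridgeless (so some nowhere-zero flow exists) and that $\{\gamma,\delta\}$ is not such a cut. When the four edges $p,q,r,s$ do not split $G-u-v$ this holds automatically; otherwise $G$ carries a small edge-cut around $e$, and here I would fix a perfect matching through $e$ via Theorem~\ref{PetersenTheorem}, pass to a strictly smaller graph through the $2$-cut and $3$-cut reductions introduced above, and argue by induction on $|V(G)|$, switching if necessary to the alternative cross-connection $\gamma=u_1v_2$, $\delta=u_2v_1$ to keep $G'$ bridgeless. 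Once these conditions hold I would upgrade ``some flow separates $\gamma,\delta$'' to ``some \emph{nowhere-zero} flow does'': start from a Jaeger flow (Theorem~\ref{thm:Jaeger8flow}) and, if the two values coincide, add a nonzero group element along an even subgraph meeting exactly one of $\gamma,\delta$ while avoiding any edge whose value it would cancel, mirroring the circuit-correction arguments used for the auxiliary $\mathbb{Z}_2^2$-statements. I expect the real work to lie entirely in this connectivity bookkeeping, namely ruling out or reducing the configurations in which $\{\gamma,\delta\}$ is forced to be a cut, since the lifting and the final flow adjustment are routine flow algebra.
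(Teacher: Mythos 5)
Your reduction is clean and correct as far as it goes: the characterization of ``$e$ poor'' as $S_\theta(u)=S_\theta(v)$, the dispatch of the $3$-edge-colorable case, and the lift of a flow from $G'$ back to $G$ (conservation at $u,v$ and the nonvanishing of $\theta(e)=\theta'(\gamma)+\theta'(\delta)$) are all sound. But the proof stops exactly where the actual difficulty begins. The statement you reduce to --- a bridgeless cubic-ish graph $G'$ admits a \emph{nowhere-zero} $\mathbb{Z}_2^3$-flow with $\theta'(\gamma)\neq\theta'(\delta)$ whenever $\{\gamma,\delta\}$ is not a cut --- is not established. Your proposed mechanism, ``add a nonzero group element along an even subgraph meeting exactly one of $\gamma,\delta$ while avoiding any edge whose value it would cancel,'' is not routine flow algebra: there is no argument that such an even subgraph exists, and in a merely bridgeless graph every even subgraph through $\gamma$ and not $\delta$ may well hit edges of every one of the seven nonzero values. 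This is precisely why the paper does not work with an arbitrary bridgeless graph at this stage: it first reduces (by the $2$-cut and $3$-cut splittings you allude to) to the cyclically $4$-edge-connected case, then contracts the $2$-factor complementary to a perfect matching through an edge $g$ adjacent to $e$, obtaining a genuinely $4$-edge-connected pseudograph $H$ in which Theorem~\ref{thm:NashWilliams} yields two edge-disjoint spanning trees avoiding two prescribed edges; that spanning-tree structure is what makes Lemma~\ref{lem:TwoEdgeEqual} (equal values on two prescribed edges of $H$, which after the lift forces $S_\mu(u)=S_\mu(w)$) provable. Your sketch never reaches $4$-edge-connectivity --- indeed your surgery deletes two adjacent vertices and can only lower connectivity --- so the ``connectivity bookkeeping'' plus ``final flow adjustment'' you defer is the entire content of the lemma, not a technicality.

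A secondary, smaller concern: your induction is not set up. After the surgery, $G'$ need not be bridgeless, and the fallback to the cross-connection $u_1v_2,u_2v_1$ is asserted rather than checked (degenerate identifications among $u_1,u_2,v_1,v_2$, and the case where both connections leave a $2$-edge-cut, need to be ruled out; the latter in fact cannot occur because it would make $e$ a bridge, but that is an argument you must supply). None of these points is fatal to the strategy --- one could likely push your reduction through by importing the paper's connectivity reductions and then proving your ``two distinct values'' statement for $4$-edge-connected graphs via Corollary~\ref{cor:TwoNotEqualedges} --- but as written the proposal proves the easy half of the lemma and leaves the hard half as a plausible-sounding plan.
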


\begin{proof} Consider a possible counterexample $G$ with the minimum number of vertices. Clearly, $G$ is connected. Let us show that it has no $2$-edge-cuts. By contradiction, assume $C$ is a $2$-edge-cut of $G$. Consider the cubic graphs $G_1$ and $G_2$ obtained by the $2$-cut reduction of $C$. Since $G_1$ and $G_2$ are  smaller than $G$, we have that they are not counterexamples. 

If $e\notin C$, we can assume that $e\in E(G_1)$ (see Remark \ref{rem:cutreduction}). Take a nowhere zero $\mathbb{Z}_2^3$-flow $\theta$, where $e$ is poor in $G_1$, and any nowhere zero $\mathbb{Z}_2^3$-flow $\mu$ of $G_2$. By choosing a suitable automorphism of $\mathbb{Z}_2^3$ (Remark \ref{rem:automorphisms}), we can assume that $\theta$ and $\mu$ agree on edges arising from $C$. Thus, we can easily construct a nowhere zero $\mathbb{Z}_2^3$-flow of $G$, where $e$ is poor. 

On the other hand, if $e\in C$, then assume $e=uv$ and let $e'=u'v'$ be the other edge of $C$. We assume that $u$ and $u'$ belong to the same component of $G-C$. A similar statement holds for $v$ and $v'$. Consider the cubic graphs $G_1$ and $G_2$ obtained by the $2$-cut reduction of $C$ by adding possibly parallel edges $e_1=uu'$ and $e_2=vv'$. Since $G_1$ and $G_2$ are smaller than $G$, we can make $e_1$ poor in a nowhere zero $\mathbb{Z}_2^3$-flow of $G_1$, and $e_2$ poor in a nowhere zero $\mathbb{Z}_2^3$-flow of $G_2$. By choosing a suitable automorphism of $\mathbb{Z}_2^3$ (Remark \ref{rem:automorphisms}), we can assume that these two flows have the same value on $e_1$ and $e_2$. Moreover, the values of these flows are the same on edges incident to $u$ and $v$ (hence on edges incident to $u'$ and $v'$). Now, we can easily construct a nowhere zero $\mathbb{Z}_2^3$-flow of $G$, where $e$ is poor. 

Thus, our counterexample is $3$-connected. Let us show that all $3$-edge-cuts in $G$ are trivial. Assume that there is a non-trivial $3$-edge cut $C$. Let us show that $e\in C$. On the opposite assumption, consider the two $3$-connected cubic graphs $G_1$ and $G_2$ obtained by a $3$-cut reduction of $C$. Assume that $e\in E(G_1)$. Since $G_1$ is not a counterexample, we have that $e$ can be made poor in a nowhere zero $\mathbb{Z}_2^3$-flow $\theta$ of $G_1$. Take an arbitrary nowhere zero $\mathbb{Z}_2^3$-flow of $G_2$. By choosing a suitable automorphism of $\mathbb{Z}_2^3$ (Remark \ref{rem:automorphisms}), we can have that these two flows agree on edges of $C$. But then, we will get a nowhere zero $\mathbb{Z}_2^3$-flow of $G$, where $e$ is poor contradicting our assumption that $G$ is a counterexample.

Thus, we can assume that $e\in C$. Again, consider the two $3$-connected cubic graphs $G_1$ and $G_2$ obtained by a $3$-cut reduction of $C$. Since $G_1$ and $G_2$ are smaller than $G$, we have that they are not counterexamples, hence $e$ can be made poor in a nowhere zero $\mathbb{Z}_2^3$-flow $\theta_i$ of $G_i$, $i=1,2$. By choosing a suitable automorphism of $\mathbb{Z}_2^3$ (Remark \ref{rem:automorphisms}), we can assume that $\theta_1$ and $\theta_2$ agree on edges of $C$. Now consider the nowhere zero $\mathbb{Z}_2^3$-flow $\phi$ arising from $\theta_1$ and $\theta_2$. Since $e$ is poor in both $\theta_i$, $\theta_1$ and $\theta_2$ agree on edges of $C$, we have that $e$ is poor in $\phi$. This contradicts our assumption that $G$ is a counterexample. 

Thus, we can assume that $G$ is cyclically $4$-edge-connected. Let $g$ be an edge adjacent to $e$. Consider a perfect matching $M$ containing $g$ (Theorem \ref{PetersenTheorem}). Observe that $\overline{M}$, the $2$-factor complementary to $M$, contains the edge $e$. Consider the pseudo-graph $H=G/E(\overline{M})$ obtained from $G$ by contracting the edges of $\overline{M}$. We keep the parallel edges and loops arising as a result of this. Since $G$ is cyclically $4$-edge-connected, we have that $H$ is $4$-edge-connected. Let $g_e$ be the edge of $M$ that is adjacent to $e$, and is different from $g$. By Lemma \ref{lem:TwoEdgeEqual}, $H$ admits a nowhere zero $\mathbb{Z}_2^2$-flow $\theta$, such that $\theta(g)= \theta(g_e)$.

We now extend $\theta$ to a nowhere zero $\mathbb{Z}_2^3$-flow $\mu$ of $G$ as follows (see the proof of Lemma 5.2 in \cite{HolySkoJCTB2004}): first for any edge $h\in M$, we define the triple $\mu(h)$ as follows: $\mu(h)=(0,\theta(h))$. Now, let $C$ be any cycle of $\overline{M}$. Let $x_0$ be any element of $\mathbb{Z}_2^3$, whose first coordinate is $1$. Assign $x_0$ to an edge of $C$. Then observe that the rest of the values of edges of $C$ are defined uniquely in $\mu$. Moreover, the first coordinate of the values of $\mu$ on $C$ is $1$. Hence for any edges $h_1\in M$ and $h_2\in \overline{M}$, we have $\mu(h_1)\neq \mu(h_2)$. Also observe that for different cycles of $\overline{M}$ we can choose $x_0$ differently.

Now, let us show that $\mu$ meets our constraints. Since $\theta(g)= \theta(g_e)$, we have $\mu(g)= \mu(g_e)$. Hence the two edges of $\overline{M}$ adjacent to $e$ must have the same value in $\mu$. Hence the edge $e$ is poor in $\mu$.
\end{proof}

Our next statement shows that any two adjacent edges of a $3$-connected cubic graph can be made rich in a nowhere zero $\mathbb{Z}_2^3$-flow. Note that the statement cannot be proved for all bridgeless cubic graphs (see example in Figure \ref{fig:ExampleNonrichedgeSimple6}).

\begin{lemma}
\label{lem:chi7twoedgeNowhereZero8flow} Let $G$ be a $3$-connected cubic graph, and let $e$ and $f$ be two adjacent edges of $G$. Then, $G$ admits a nowhere zero $\mathbb{Z}_2^3$-flow such that $e$ and $f$ are rich.
\end{lemma}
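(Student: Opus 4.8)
The plan is to read richness off the flow and then run a minimal-counterexample argument that strips away non-trivial $3$-edge-cuts until the graph is cyclically $4$-edge-connected, where Lemma \ref{lem:3edgesAdjacent4connected} can be brought to bear. First I record the dictionary: in the $7$-edge-coloring induced by a nowhere zero $\mathbb{Z}_2^3$-flow $\phi$, the set of colors $S_\phi(x)$ at a vertex $x$ is the set of three nonzero elements of the $2$-dimensional subspace $\pi_x$ spanned by the values of $\phi$ on any two edges at $x$. Hence an edge $xy$ is poor exactly when $\pi_x=\pi_y$ and rich exactly when $\pi_x\neq\pi_y$, and my task is to build a flow with $\pi_u\neq\pi_v$ for $e=uv$ and $\pi_v\neq\pi_w$ for $f=vw$, where $v$ is the common endpoint of $e,f$ and $h=vt$ is the third edge at $v$.

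Let $G$ be a counterexample with fewest vertices. Since $G$ is $3$-connected it has no $1$- or $2$-edge-cut, so if $G$ is not cyclically $4$-edge-connected it has a non-trivial $3$-edge-cut $C$; I perform its $3$-cut reduction to get smaller $3$-connected cubic graphs $G_1,G_2$. One observation keeps the case analysis short: no two edges of $C$ share a vertex $x$, for otherwise $C\bigtriangleup T_x$, with $T_x$ the trivial cut at $x$, would be a $2$-edge-cut of $G$, contradicting $3$-connectivity. Thus at most one of $e,f$ lies in $C$, and, placing $v$ on the $G_1$-side, both $e$ and $f$ are edges of $G_1$ meeting at $v$, with $u,v,w$ all on the $G_1$-side unless the corresponding edge is a cut edge. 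If neither $e$ nor $f$ is a cut edge, then $u,v,w$ all lie on the $G_1$-side, I make $e,f$ rich in $G_1$ by minimality, take any flow on $G_2$, and glue along $C$ after an automorphism of $\mathbb{Z}_2^3$ matching the three cut values (Remark \ref{rem:automorphisms}); the planes $\pi_u,\pi_v,\pi_w$ are then exactly those computed in $G_1$, so $e,f$ stay rich.

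The delicate case is when exactly one of them, say $e=uv$, is a cut edge, with $u$ on the $G_2$-side. I make $e,f$ rich in $G_1$, so $\pi_v$ differs from the cut-plane $\pi_{z_1}$; and since $u$ carries only one cut edge, I also apply minimality in $G_2$ to the pair formed by $e$ and a second edge at $u$, making $e$ rich there, i.e. $\pi_u\neq\pi_{z_2}$. After matching the cut values, $\pi_{z_1}=\pi_{z_2}=:\pi_z$, and $\pi_u,\pi_v$ are two of the three planes through the line $\langle\phi(e)\rangle$, both different from $\pi_z$. The point is that the group of four automorphisms of $\mathbb{Z}_2^3$ fixing the three cut values acts on the two remaining planes through $\langle\phi(e)\rangle$ and contains one that swaps them; hence I may choose the gluing automorphism on $G_2$ so that $\pi_u$ becomes the plane through $\langle\phi(e)\rangle$ different from $\pi_v$, forcing $\pi_u\neq\pi_v$ while leaving $\pi_v,\pi_w$ (hence the richness of $f$) untouched. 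I expect this automorphism-selection step to be the main obstacle: it is the one place where gluing does not preserve richness for free, and it is where the structure of $\operatorname{Aut}(\mathbb{Z}_2^3)$ is genuinely used.

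It remains to treat the cyclically $4$-edge-connected case. I choose a perfect matching $M$ containing the third edge $h$ at $v$ (Theorem \ref{PetersenTheorem}); then $e,f$ are the two edges of the complementary $2$-factor $\overline{M}$ at $v$, so $u,v,w$ are consecutive on one cycle of $\overline{M}$. Contracting $\overline{M}$ gives $H=G/E(\overline{M})$, which is $4$-edge-connected because $G$ is cyclically $4$-edge-connected, and the matching edges $m_u$, $h=m_v$, $m_w$ are three edges of $H$ incident to the vertex obtained by contracting that cycle. By Lemma \ref{lem:3edgesAdjacent4connected} there is a nowhere zero $\mathbb{Z}_2^2$-flow $\theta$ on $H$ with $\theta(h)\neq\theta(m_u)$ and $\theta(h)\neq\theta(m_w)$. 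Extending $\theta$ to a nowhere zero $\mathbb{Z}_2^3$-flow of $G$ exactly as in the proof of Lemma \ref{lem:1edgePoor} (value $(0,\theta(\cdot))$ on $M$, first coordinate $1$ around each cycle of $\overline{M}$), a short computation shows that an $\overline{M}$-edge $ab$ is rich iff $\theta(m_a)\neq\theta(m_b)$; applied to $e=uv$ and $f=vw$ this makes both rich. The only gap is the distinctness of $m_u,h,m_w$, which can fail only if $u\sim w$, i.e. if $uvw$ is a triangle; but a counting argument shows a triangle in a cyclically $4$-edge-connected cubic graph forces $G=K_4$, and for $K_4$ one checks the claim directly.
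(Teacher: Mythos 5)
Your proof is correct and follows the same overall route as the paper: minimal counterexample, elimination of non-trivial $3$-edge-cuts by $3$-cut reduction, and then the cyclically $4$-edge-connected case via a perfect matching through the third edge at $v$, contraction of the complementary $2$-factor, and Lemma \ref{lem:3edgesAdjacent4connected}. The one genuine divergence is the subcase where $e$ lies in the non-trivial $3$-edge-cut $C$: you make $e$ rich on the $G_2$-side as well and then invoke the order-$4$ pointwise stabilizer of the cut values in $\operatorname{Aut}(\mathbb{Z}_2^3)$ to swap, if necessary, the two planes through $\langle\phi(e)\rangle$ other than the cut-plane, so that $\pi_u\neq\pi_v$. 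That computation is correct, but the paper sidesteps it entirely: it applies Lemma \ref{lem:1edgePoor} to make $e$ \emph{poor} in $G_2$, which forces $\pi_u$ to equal the cut-plane and hence to differ from $\pi_v$ automatically, with no automorphism selection beyond the standard matching of cut values. Your version costs a slightly more delicate argument but avoids a dependence of this subcase on Lemma \ref{lem:1edgePoor}; the paper's version is shorter. A minor point in your favour: you explicitly dispose of the degenerate situation $g_e=g_f$ (the triangle forcing $G=K_4$, where $uw$ becomes a loop in $H$), which the paper's write-up passes over in silence.
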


\begin{proof} Consider a possible counterexample $G$ with the minimum number of vertices. Since $G$ is $3$-connected, we have that any non-trivial $3$-edge cut should be a matching. Let us show that there are no non-trivial $3$-edge cuts in $G$. 

Assume $C$ is a non-trivial $3$-edge cut. Let us show that $C\cap \{e,f\}\neq \emptyset$. On the opposite assumption, consider the two $3$-connected cubic graphs $G_1$ and $G_2$ obtained by a $3$-cut reduction of $C$. Assume that $e,f\in E(G_1)$. Since $G_1$ is not a counterexample, we have that $e$ and $f$ can be made rich in a nowhere zero $\mathbb{Z}_2^3$-flow $\theta$ of $G_1$. Take arbitrary nowhere zero $\mathbb{Z}_2^3$-flow of $G_2$. By choosing a suitable automorphism of $\mathbb{Z}_2^3$ (Remark \ref{rem:automorphisms}), we can have that these two flows agree on edges of $C$. But then, we will get a nowhere zero $\mathbb{Z}_2^3$-flow of $G$, where $e$ and $f$ are rich contradicting our assumption that $G$ is a counterexample.

Thus, we can assume that $C\cap \{e,f\}\neq \emptyset$. Since $C$ is a matching, and $e$ and $f$ are adjacent to the same vertex, we have that only one on them belongs to $C$. Assume that it is $e$. Again, consider the two $3$-connected cubic graphs $G_1$ and $G_2$ obtained by a $3$-cut reduction of $C$. Assume that $f\in E(G_1)$. Since $G_1$ is smaller than $G$, $G_1$ is not a counterexample, hence $e$ and $f$ can be made rich in a nowhere zero $\mathbb{Z}_2^3$-flow $\theta$ of $G_1$. By Lemma \ref{lem:1edgePoor}, we can make $e$ poor in a nowhere zero $\mathbb{Z}_2^3$-flow $\mu$ of $G_2$. By choosing a suitable automorphism of $\mathbb{Z}_2^3$ (Remark \ref{rem:automorphisms}), we can assume that $\theta$ and $\mu$ agree on edges of $C$. Now consider the nowhere zero $\mathbb{Z}_2^3$-flow arising from $\theta$ and $\mu$. Since $e$ and $f$ were rich in $\theta$, $\theta$ and $\mu$ agree on edges of $C$, and $e$ was poor in $\mu$, we have that $e$ and $f$ are rich in $G$. This contradicts our assumption that $G$ is a counterexample. 

Thus, we can assume that all $3$-edge-cuts of $G$ are trivial. Hence $G$ is cyclically $4$-edge-connected. Let $g$ be the third edge adjacent to $e$ and $f$. Consider a perfect matching $M$ containing $g$ (Theorem \ref{PetersenTheorem}). Observe that $\overline{M}$, the $2$-factor complementary to $M$, contains the edges $e$ and $f$. Moreover, they lie in the same cycle of the $2$-factor. Consider the pseudo-graph $H=G/E(\overline{M})$ obtained from $G$ by contracting all edges of $\overline{M}$. We keep the parallel edges and loops arising as a result of this. Since $G$ is cyclically $4$-edge-connected, we have that $H$ is $4$-edge-connected. Let $g_e$ and $g_f$ be the edges of $M$ that are adjacent to $e$ and $f$, respectively, and are different from $g$. By Lemma \ref{lem:3edgesAdjacent4connected}, $H$ admits a nowhere zero $\mathbb{Z}_2^2$-flow $\theta$, such that $\theta(g)\neq \theta(g_e)$ and $\theta(g)\neq \theta(g_f)$.

We now extend $\theta$ to a nowhere zero $\mathbb{Z}_2^3$-flow $\mu$ of $G$ exactly in the way we did in the proof of Lemma \ref{lem:1edgePoor}. We have $\mu(g_e) \neq \mu (g)$. Thus, the edge $e$ is rich in $\mu$. Similarly, since $\theta(g)\neq \theta(g_f)$ by our choice, one can easily show that $f$ is rich in $\mu$.
\end{proof}



\begin{corollary}
\label{cor:chi7oneedgeNowhereZero8flow} Let $G$ be a $3$-connected cubic graph, and let $e$ be an edge. Then $G$ admits a nowhere zero $\mathbb{Z}_2^3$-flow $\theta$, such that $e$ is rich in $\theta$.
\end{corollary}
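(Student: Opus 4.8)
The plan is to derive this statement directly from Lemma \ref{lem:chi7twoedgeNowhereZero8flow}, which already guarantees a nowhere zero $\mathbb{Z}_2^3$-flow making \emph{two} prescribed adjacent edges rich. Since the richness of a single edge is a weaker requirement than the richness of a pair, the corollary should follow simply by specializing that lemma to an appropriately chosen pair containing $e$.

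Concretely, write $e = uv$. Because $G$ is cubic, the endpoint $u$ has degree three, so there is at least one further edge $f$ incident to $u$, and this $f$ is adjacent to $e$. (In fact $e$ has exactly four adjacent edges, two at $u$ and two at $v$, so such an $f$ always exists.) I would then apply Lemma \ref{lem:chi7twoedgeNowhereZero8flow} to the pair $(e,f)$ to obtain a nowhere zero $\mathbb{Z}_2^3$-flow $\theta$ of $G$ in which both $e$ and $f$ are rich. Discarding the now superfluous information about $f$, the flow $\theta$ witnesses that $e$ is rich, which is precisely the assertion.

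There is essentially no obstacle here: the only things to verify are that an adjacent edge $f$ exists, which is immediate from cubicity, and that the hypothesis of the lemma, namely $3$-connectedness of $G$, is inherited verbatim from the corollary's hypothesis. No new flow construction, case analysis, or cut reduction is needed; the entire content is already contained in the preceding lemma, and the proof reduces to a single invocation of it.
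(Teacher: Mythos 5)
Your proposal is correct and matches the paper's intent exactly: the statement is presented as an immediate corollary of Lemma~\ref{lem:chi7twoedgeNowhereZero8flow}, obtained by choosing any edge $f$ adjacent to $e$ (which exists since $G$ is cubic) and applying the lemma to the pair $(e,f)$. No further argument is needed.
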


Now, we are going to consider simple graphs which are obtained from any bridgeless cubic graph by subdividing one of its edges and attaching a bridge to the new degree two vertex. The other end-vertex of the bridge has degree one. We are going to show that any such graph admits a normal edge-coloring with at most $7$ colors. Here the normality is understood in the following way: in the coloring adjacent edges receive different colors, all edges of the graph except the unique bridge must be poor or rich. However we do not impose any constraint on the bridge.

\begin{theorem}
\label{thm:EndblocksNormal7colors} Let $G'$ be a simple graph obtained from a bridgeless cubic graph $G$ by subdividing one of its edges once, adding a new vertex and adding an edge connecting the degree-two vertex with the new vertex. Then $\chi'_N(G')\leq 7$. 
\end{theorem}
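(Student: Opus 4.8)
The plan is to start from a nowhere zero $\mathbb{Z}_2^3$-flow of $G$ and transfer it to $G'$, exploiting that the pendant edge carries no normality constraint. As in the proof of Theorem \ref{thm:7bridgeless}, a nowhere zero $\mathbb{Z}_2^3$-flow of a cubic graph, read as a colouring by the seven nonzero elements of $\mathbb{Z}_2^3$, is automatically normal: at every degree-three vertex the three values sum to zero, so each is determined by the other two (Remark \ref{rem:Normal}), forcing every edge to be poor or rich. Write the subdivided edge as $ab$, with the new vertex $w$ incident to $aw,wb,wu$ and $u$ of degree one. Since the bridge $wu$ need only be properly coloured, I may give it any value distinct from those of $aw$ and $wb$; thus the genuine constraints are only that $aw$ and $wb$ be poor or rich, and that no edge of $G$ be spoiled.

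The obstruction to a naive transfer is the following. If $\phi(ab)=t$ and I keep $\phi$ unchanged on $E(G)\setminus\{ab\}$, then preserving flow-conservation at both $a$ and $b$ would require $c(aw)=c(wb)=t$, which is impossible at $w$. More is true: summing divergences over $V(G')$ gives zero, so if conservation held at every vertex of $V(G)$ the only possible defect would sit at $w$ and $u$, forcing $c(aw)=c(wb)$ again. Hence at least one vertex of $V(G)$ must violate the flow condition, and its incident edges must be checked to be poor or rich by hand. This is the heart of the difficulty, and it is where the auxiliary lemmas enter.

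To tame it, I would first apply Lemma \ref{lem:1edgePoor} to choose $\phi$ making $ab$ poor, so that $S(a)=S(b)=\{t,p,q\}$; the resulting symmetry between the two ends is what makes a clean extension possible. Keeping $\phi$ off $ab$, set $c(aw)=t$, so that $a$ retains its palette and every edge at $a$ keeps the status it had in $\phi$. Then assign to $wb$ and to the exempt bridge $wu$ two of the \emph{four} colours lying outside $\{t,p,q\}$: with this choice both $aw$ and $wb$ become rich (their five surrounding edges use the five distinct colours $t,p,q$ and the two fresh ones). The only vertex whose flow condition is now broken is $b$, whose palette has changed from $\{t,p,q\}$ to $\{\gamma,p,q\}$, where $\gamma=c(wb)$.

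The main obstacle is thus to guarantee that the two edges of $b$ other than $wb$ remain poor or rich after this palette change. A poor edge at $b$ would be destroyed by the change, so the plan is to arrange that these two (adjacent) edges are rich — using Lemma \ref{lem:chi7twoedgeNowhereZero8flow} — and to pick the fresh colour $\gamma$ avoiding the two ``outer'' colours of each of their far endpoints; having seven colours rather than five is exactly what creates room for such a $\gamma$, and when the outer colours leave no admissible choice I expect to repair $b$ by a local recolouring (a Kempe-type swap on a suitable colour class, which preserves normality away from the swapped region). Finally, the connectivity hypotheses are reduced in the standard way: $G$ has no bridges, and any $2$- or $3$-edge-cut of $G$ is handled exactly as in Lemmas \ref{lem:1edgePoor} and \ref{lem:chi7twoedgeNowhereZero8flow}, gluing flows across the cut by a suitable automorphism of $\mathbb{Z}_2^3$ (Remark \ref{rem:automorphisms}), so that it suffices to treat the cyclically $4$-edge-connected case where the construction above applies. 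I anticipate that verifying the survival of $b$'s remaining edges — i.e.\ the interaction between the forced richness, the choice of $\gamma$, and the possible local recolouring — will be the technically delicate step, while the rest is routine bookkeeping.
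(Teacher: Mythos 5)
Your overall strategy --- keep the flow of $G$ intact at one endpoint of the subdivided edge, change the palette only at the other endpoint $b$, and rely on richness of the two surviving edges at $b$ to absorb the change --- is essentially the strategy of the paper. But there is a genuine gap at the very first step: you invoke Lemma \ref{lem:1edgePoor} to make $ab$ poor \emph{and} Lemma \ref{lem:chi7twoedgeNowhereZero8flow} to make the two other edges at $b$ rich, in the \emph{same} flow $\phi$. Each of those lemmas produces its own nowhere zero $\mathbb{Z}_2^3$-flow, and nothing in the paper or in your argument lets you realize both properties simultaneously; a combined statement would need its own proof and it is not clear it holds in general. The paper sidesteps this: it applies only Lemma \ref{lem:chi7twoedgeNowhereZero8flow} (to the two edges $ww_1,ww_2$ at one endpoint of $e$) and then \emph{derives} the structure of the palettes from flow conservation --- in particular that the palettes of $w_1$ and $w_2$ share exactly one value, whence $\theta(e)=y+z$ and the palette at the other endpoint $u$ is confined to a short list of cases, each checked by hand. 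That derivation, not the choice of a convenient flow, is the technical heart of the $3$-edge-connected case, and your proposal contains no substitute for it. Relatedly, the fallback of a ``Kempe-type swap'' when no admissible fresh colour $\gamma$ exists is not an argument: once flow conservation is broken at $b$ the colouring is no longer a flow, so Remark \ref{rem:Normal} no longer protects edges near the swapped region, and normality is a distance-two condition that a colour-class swap can destroy two edges away from $b$.

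The second gap is the claim that the non-$3$-edge-connected case reduces ``in the standard way'' by cutting along $2$-edge-cuts and regluing via automorphisms of $\mathbb{Z}_2^3$ (Remark \ref{rem:automorphisms}). That trick glues two \emph{flows}; the object you build on the piece containing the subdivision vertex is a normal colouring that is not a flow, so the automorphism argument does not apply to it as such. The paper's actual treatment of this case occupies more than half of the proof: it introduces ladders (chains of nested $2$-edge-cuts), inducts on $|V(G)|$ when some ladder avoids $e$, proves that otherwise the two end-blocks $G_1+u_0v_0$ and $G_2+u_mv_m$ are $3$-edge-connected, and then exhibits three separate explicit colourings according to whether $e$ is an initial, a horizontal, or a vertical edge of the ladder. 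None of this is routine bookkeeping, and your one-sentence reduction does not recover it.
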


\begin{proof} Let $G$ be a bridgeless cubic graph, and let $e=uw$ be any edge of $G$. We can assume that $G$ is connected. Consider the graph $G'$ obtained from $G$ by subdividing $e$ with a vertex $v_e$. The vertex $v_e$ is incident to the unique bridge in $G'$. We have that all degrees in $G'$ are three except the new vertex adjacent to $v_e$ which has degree one. Moreover, assume that $w_1$ and $w_2$ are the other two neighbors of $w$ in $G$ that differ from $u$.

First, we consider the case when $G$ is $3$-edge-connected. By Lemma \ref{lem:chi7twoedgeNowhereZero8flow}, there is a nowhere zero $\mathbb{Z}_2^3$-flow $\theta$, such that $ww_1$ and $ww_2$ are rich. Observe that since $\theta(ww_1)\neq \theta(ww_2)$, the two values of $\theta$ on edges incident to $w_1$ that differ from $ww_1$ cannot coincide with the two values of $\theta$ on edges that are incident to $w_2$ and differ from $ww_2$. Let us show that the intersection of these two sets is exactly one. We need to rule out the case when they are disjoint.

Assume that $w_1$ is incident to edges with flow values $x$ and $y$, and let $\theta(ww_1)=x+y$. Observe that $x+y$ cannot appear around $w_2$, as $\theta(ww_1)\neq \theta(ww_2)$ and $ww_2$ is rich. Let $z$ be an element of $\mathbb{Z}^3_2$ such that $z \notin \{0,x,y,x+y\}$. Then, $\mathbb{Z}^3_2=\{0,x,y,x+y\} \cup \{z,x+z,y+z,x+y+z\}$. The edges incident to $w_2$ that differ from $ww_2$ have flow value in $\{z,x+z,y+z,x+y+z\}$. Then, in any case, the flow value of $ww_2$ belongs to $\{x,y,x+y\}$, which is a contradicion since either we have two incident edges with the same flow value or $ww_1$ is not rich.

Thus, without loss of generality, we can assume that $w_1$ is incident to edges with flow values $x,y$ and $\theta(ww_1)=x+y$, $w_2$ is incident to edges with flow values $x,z$ and $\theta(ww_2)=x+z$, and $y\neq z$. By considering $\partial(\{w,w_1,w_2\})$, we have that $\theta(e)=y+z$. Let $t_1$ and $t_2$ be the two values of $\theta$ on edges incident to $u$ that differ from $e$. Clearly, $t_1+t_2=y+z$. Now, we are going to obtain a normal $7$-edge-coloring of $G'$ using the seven non-zero elements of $\mathbb{Z}_2^3$. We will consider two cases.

\medskip

Case 1: $\{t_1, t_2\}\cap \{x,y,z\}=\emptyset$. Let us show that we can assume that $\{t_1, t_2\}\cap \{x,y,z,x+z,x+y\}=\emptyset$. If not, we have that  $\{t_1, t_2\}= \{x+y,x+z\}$ and edge $e$ is poor in $\theta$. Extend $\theta$ to a normal $7$-edge-coloring $c$ of $G$ as follows: take $c$ equal to $\theta$ everywhere in $G'$, except $c(uv_e)=y+z$, $c(v_ew)=x+y+z$ and the value of $c$ on the unique bridge of $G'$ is $x$. It can be easily seen that $c$ is a normal $7$-edge-coloring of $G'$. 

Thus we can assume that $\{t_1, t_2\}\cap \{x,y,z,x+z,x+y\}=\emptyset$, that is $\{t_1, t_2\}=\{y+z,x+y+z\}$. Again, we have a contradiction since $\theta(e)=y+z$ and then we have two edges incident $u$ with value $y+z$. 

Case 2: $\{t_1, t_2\}\cap \{x,y,z\}\neq \emptyset$, that is either $\{t_1,t_2\}=\{x,x+y+z\}$ or $\{t_1,t_2\}=\{y,z\}$.
Extend $\theta$ to a normal $7$-edge-coloring $c$ of $G$ as follows: take $c$ equal to $\theta$ everywhere in $G'$, except $c(uv_e)=y+z$, $c(v_ew)=x+y+z$ and the value of $c$ on the unique bridge of $G'$ is $x$. It can be easily seen that $c$ is a normal $7$-edge-coloring of $G'$ in both cases: more precisely, if $\{t_1,t_2\}=\{x,x+y+z\}$ then $uv_e$ is poor and if $\{t_1,t_2\}=\{y,z\}$ then $uv_e$ is rich. 

Thus, it remains to consider the case when $G$ has a $2$-edge-cut. Let us prove the statement by induction on the number of vertices. 

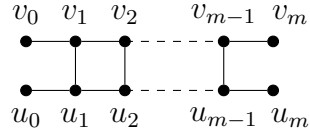
\begin{figure}[!htbp]
\begin{center}

\begin{tikzpicture}[scale=0.65]



\node at (0, -0.55) {$u_0$};
  \node at (0,1.55) {$v_0$};
  
  \node at (1, -0.55) {$u_1$};
  \node at (1, 1.55) {$v_1$};
  
  \node at (2, -0.55) {$u_2$};
  \node at (2,1.55) {$v_2$};

\node at (4, -0.55) {$u_{m-1}$};
  \node at (4,1.55) {$v_{m-1}$};

 \node at (5.35, -0.55) {$u_m$};
  \node at (5.35,1.55) {$v_m$};

 \tikzstyle{every node}=[circle, draw, fill=black!,
                        inner sep=0pt, minimum width=4pt]
    \node at (0,0) (n00) {};
    \node at (0,1) (n01) {};
    
    \node at (1,0) (n10) {};
    \node at (1,1) (n11) {};
    
    \node at (2,0) (n20) {};
    \node at (2,1) (n21) {};
    
    \node at (4,0) (n40) {};
    \node at (4,1) (n41) {};
    
    \node at (5,0) (n50) {};
    \node at (5,1) (n51) {};


 \draw (n00)--(n10);
 \draw (n11)--(n01);
 \draw (n11)--(n10);
 
 \draw (n11)--(n21);
 \draw (n10)--(n20);
 \draw (n20)--(n21);
 
 \draw (n40)--(n41);
 \draw (n40)--(n50);
 \draw (n41)--(n51);

 \draw[dashed] (2,0) -- (4,0);
 \draw[dashed] (2,1) -- (4,1);

\end{tikzpicture}
\end{center}
\caption{The $m$-ladder $L$ with initial vertices $u_0, v_0$ and terminal vertices $u_m, v_m$.} \label{fig:Ladder}
\end{figure}

For a positive integer $m$ define an $m$-ladder of $G$ as a subgraph $L$ (Figure \ref{fig:Ladder}) of $G$, such that: 
\begin{itemize}
 \item $V(L)=\{u_0,v_0,...,u_m, v_m\},$
 \item $E(L)=\{u_0u_1,u_1u_2,...,u_{m-1}u_m,v_0v_1,v_1v_2,...,v_{m-1}v_m,u_1v_1,...,u_{m-1}v_{m-1}\},$
 \item $u_0v_0\notin E(G), u_mv_m\notin E(G),$
 \item the set $\{u_iu_{i+1}, v_iv_{i+1}\}$ is a $2$-edge-cut of $G$ for all $i \in \{0,1,\ldots,m-1\}$,
 \item $u_0$, $v_0$ belong to the same component of $G-u_0u_1-v_0v_1$,
 \item $u_m$, $v_m$ belong to the same component of $G-u_{m-1}u_{m}-v_{m-1}v_m$.
\end{itemize}

Observe that since $G$ is bridgeless and cubic, for each 2-edge-cut $C$ in $G$ there is a positive integer $m$ and an $m$-ladder $L$ of $G$, such that $C\subseteq E(L)$. Moreover, $L$ is an induced subgraph of $G$. Indeed, following the notation for vertices and edges of $L$ introduced above, for each $i \in \{0,1,\ldots,m-1\}$ the pair of edges $u_iu_{i+1}$ and $v_iv_{i+1}$ is a $2$-edge-cut of $G$ (and $C$ is one of such pairs) and we have that $u_0$, $v_0$ belong to the same component of $G-u_{i}u_{i+1}-v_{i}v_{i+1}$ for any $i$ (and the same holds for $u_m$, $v_m$).
%
%


Assume that there exists a ladder $L$ such that the edge $e$ does not belong to $E(L)$.
Denote by $u_0$ and $v_0$ the initial vertices of $L$, which belong to a component $G_1$ of $G-E(L)$. Similarly, denote by $u_m$ and $v_m$ the terminal vertices of $L$, which belong to a component $G_2$ of $G-E(L)$.
Since $e$ does not lie in $L$, then it must lie either in $G_1$ or $G_2$. For the sake of definiteness, let $e\in E(G_1)$. Consider the cubic graph $H$ obtained from $G_1$ by adding the edge $u_0v_0$. By the definition of $L$, we have that $|V(H)|<|V(G)|$ and the graph $H'$ obtained from $H$ by subdividing the edge $e$ and attaching a pendant edge to $v_e$ is simple. Thus, by induction hypothesis, $\chi'_N(H')\leq 7$. Let $x$ be a non-zero element of $\mathbb{Z}_2^3$ such that the color of $u_0v_0$ in $H'$ is $x$. Now, consider a graph $H_1$ obtained from $G$ by removing the vertices of $G_1$ and adding a possibly parallel edge $u_1v_1$. Observe that $H_1$ is a bridgeless cubic graph, hence by Theorem \ref{thm:7bridgeless} it has a normal $7$-coloring arising from a nowhere zero $\mathbb{Z}_2^3$-flow of $H_1$. By renaming the colors in $H'$, we can assume that the color of $u_1v_1$ is $x$, and that the two colors incident to $u_1$ in $H_1$ coincide with two other colors incident to $u_0$ in $H'$. Now, consider an edge-coloring of $G'$ obtained from normal edge-colorings of $H'$ and $H_1$ by coloring the edges $u_0u_1$ and $v_0v_1$ with $x$. Observe that $u_0u_1$ is poor in $G'$, moreover, if $u_0v_0$ was poor in $H'$ or $u_1v_1$ was poor in $H_1$, then the new coloring is a normal $7$-edge-coloring of $G'$. On the other hand, if both $u_0v_0$ and $u_1v_1$ were rich in $H'$ and $H_1$, respectively, then we can always rename the colors in $H'$, so that the colors incident at $v_0$ in $H'$ coincide with the colors incident at $v_1$ in $H_1$. In the latter case, we will have that the edge $v_0v_1$ is poor.

Thus, we can assume that in $G$ for any ladder $L$ we have $e\in E(L)$. Consider a 2-edge-cut in $G$ and a ladder $L$ containing it. Define $G_1$, $G_2$ as the components of $G-E(L)$ which contain $u_0, v_0$ and $u_m, v_m$, respectively. Observe that the graphs $G_1+u_0v_0$ and $G_2+u_mv_m$ are simple. Let us show that they are $3$-edge-connected.
We prove this only for $G_1+u_0v_0$. Observe that $G_1+u_0v_0$ is bridgeless. Let us show that it has no a $2$-edge-cut. On the opposite assumption, consider a $2$-edge-cut $C_1$ of $G_1+u_0v_0$. If $u_0v_0\notin C_1$, then consider the ladder $L_1$ of $G$ containing the edges of $C_1$. Observe that $C_1$ is a $2$-edge-cut of $G$, such that the ladder $L_1$ containing it does not contain the edge $e$. This is a contradiction that $e$ must lie in all such ladders of $G$. Thus, we can assume that $u_0v_0\in C_1$. In this case the sets $\overline{C_1}=(C-u_0v_0)+u_0u_1$ and $\overline{C_2}=(C-u_0v_0)+v_0v_1$ are $2$-edge-cuts of $G$. Let $\overline{L_1}$ and $\overline{L_2}$ be the ladders of $G$ containing $\overline{C_1}$ and $\overline{C_2}$, respectively. Observe that at least one of them does not contain the edge $e$, which again contradicts our assumption. Thus, the graphs $G_1+u_0v_0$ and $G_2+u_mv_m$ are $3$-edge-connected.

Now, we are going to show a normal $7$-edge-coloring of $G'$. The edges $u_0u_1$, $v_0v_1$, $u_{m-1}u_m$ and $v_{m-1}v_m$ are called initial edges of $L$. The other edges of $L$ are called internal edges. First let us show the coloring of $G'$, when $e$ is an initial edge. Observe that this case includes the case when $L$ is comprised of two disjoint edges forming a 2-edge-cut. Assume that $e=u_{m-1}u_m$, where $u_{m-1}=u$ and $u_m=w$. Let us consider two graphs $H_1$ and $H_2$ obtained as follows: $H_1$ is obtained from the component of $G-u_{m-1}u_m-v_{m-1}v_m$ containing the vertex $u$ by adding a possibly parallel edge $u_{m-1}v_{m-1}$, and $H_2$ is the $3$-edge-connected graph $G_2+u_mv_m$. Observe that $H_1$ is a bridgeless cubic graph. Let $\theta_1$ be any nowhere zero $\mathbb{Z}_2^3$-flow of $H_1$, and let $\theta_2$ be a nowhere zero $\mathbb{Z}_2^3$-flow of $H_2$, such that the edges $ww_1$ and $ww_2$ are rich in $\theta_2$ (Lemma \ref{lem:chi7twoedgeNowhereZero8flow}). Here $w_1$ and $w_2$ are the neighbors of $w$ in $H_2$ that differ from $v_m$. By choosing a suitable automorphism of $\mathbb{Z}_2^3$ (Remark \ref{rem:automorphisms}), we can assume that $\theta_1(u_{m-1}v_{m-1})=\theta_2(u_mv_m)=x$. Moreover, the values of $\theta_1$ on the other two edges incident to $v_{m-1}$ agree with the values of $\theta_2$ on the other two edges incident to $v_m$. Now, we color the edge $v_{m-1}v_m$ and $uv_e$ with $x$, and extend it to a normal $7$-edge-coloring of $G'$ by considering the same strategy as we had in the case of $3$-connected graphs. 

Thus, it remains to consider the case when $e$ is an internal edge of $L$. The internal edges of $L$ are of two types, which we will naturally call horizontal and vertical edges (see Figure \ref{fig:Ladder}). For each of these cases we will exhibit a normal $7$-edge-coloring.

First, let us consider the case when the edge $e$ is a horizontal edge of $L$. We can assume that $e=u_{i-1}u_i$. As above, let $G_1$ be the component of $G-E(L)$ containing $u_0$ and $v_0$, and similarly, let $G_2$ be the component of $G-E(L)$ containing $u_m$ and $v_m$. We have that the cubic graphs $G_1+u_0v_0$ and $G_2+u_mv_m$ are $3$-edge-connected. Let $P_1$ be the shortest path of $L$ connecting $u_0$ and $u_{i-1}$, and let $P_2$ be the shortest path of $L$ connecting $u_i$ and $u_m$. Define the vertices $w'\in \{u_0, v_0\}$ and $w''\in \{u_m, v_m\}$ as follows: if the length of $P_1$ is odd, then $w'=v_0$, otherwise, $w'=u_0$, similarly, if the length of $P_2$ is odd, then $w''=v_m$, otherwise, $w''=u_m$. Since the cubic graphs $G_1+u_0v_0$ and $G_2+u_mv_m$ are $3$-edge-connected, Lemma \ref{lem:chi7twoedgeNowhereZero8flow} implies that these graphs have nowhere zero $\mathbb{Z}_2^3$-flows $\theta_1$ and $\theta_2$, such that the two edges incident to $w'$ and the two edges incident to $w''$ that differ from $u_0v_0$ and $u_mv_m$, respectively, are rich. By choosing a suitable automorphism of $\mathbb{Z}_2^3$ (Remark \ref{rem:automorphisms}), we can assume that $\theta_1(u_{0}v_{0})=\theta_2(u_mv_m)=x$. Consider the four edges of $G_1$ that are adjacent to an edge that is incident to $w'$. As we have shown in the analysis of the 3-edge-connected case (third, forth paragraphs), $\theta_1$ cannot have seven different values on these four edges together with two edges incident to $w'$ and the edge $u_0v_0$. Thus, there is a non-zero element $y$ of $\mathbb{Z}_2^3$, that does not appear on these seven edges. Similarly, define the element $z$ of $\mathbb{Z}_2^3$ as a value such that $\theta_2$ does not attain it on four edges of $G_2$ that are adjacent to an edge that is incident to $w''$, the two edges incident to $w''$ and the edge $u_mv_m$. Observe that $y \neq x$ and $z \neq x$, and by choosing a suitable automorphism (Remark \ref{rem:automorphisms}), we can not only assume $\theta_1(u_{0}v_{0})=\theta_2(u_mv_m)$, but also $y \neq z$. 

\begin{figure}[!htbp]
\begin{center}

\begin{tikzpicture}



\node at (-0.5, 0.5) {$G_1$};
\node at (13.5, 0.5) {$G_2$};

\node at (5.75, 1.25) {$y$};
\node at (7.25, 1.25) {$z$};
\node at (6.5, -0.25) {$x$};
\node at (4.5, -0.25) {$y$};
\node at (4.5, 1.25) {$x$};
\node at (8.5, -0.25) {$z$};
\node at (8.5, 1.25) {$x$};
\node at (3.5, -0.25) {$x$};
\node at (3.5, 1.25) {$y$};
\node at (9.5, 1.25) {$z$};
\node at (9.5, -0.25) {$x$};

\node at (3.5, 0.5) {$x+y$};
\node at (5.5, 0.5) {$x+y$};
\node at (7.5, 0.5) {$y+z$};
\node at (9.5, 0.5) {$y+z$};

\node at (7, 1.75) {$y+z$};

 \tikzstyle{every node}=[circle, draw, fill=black!,
                        inner sep=0pt, minimum width=4pt]
    \node at (0,0) (n00) {};
    \node at (1,0) (n10) {};
    \node at (4,0) (n40) {};
    \node at (5,0) (n50) {};
    \node at (8,0) (n80) {};
    \node at (9,0) (n90) {};
    \node at (12,0) (n120) {};
    \node at (13,0) (n130) {};
    
    \node at (0,1) (n01) {};
    \node at (1,1) (n11) {};
    \node at (4,1) (n41) {};
    \node at (5,1) (n51) {};
    \node at (6.5,1) (n651) {};
    \node at (8,1) (n81) {};
    \node at (9,1) (n91) {};
    \node at (12,1) (n121) {};
    \node at (13,1) (n131) {};


 \draw (n00)--(n10);
 \draw[dashed] (n10)--(n40);
 \draw (n40)--(n50);
 \draw (n50)--(n80);
 \draw (n80)--(n90);
 \draw[dashed] (n90)--(n120);
 \draw (n120)--(n130);
 
 \draw (n01)--(n11);
 \draw[dashed] (n11)--(n41);
 \draw (n41)--(n51);
 \draw (n51)--(n651);
 \draw (n651)--(n81);
 \draw (n81)--(n91);
 \draw[dashed] (n91)--(n121);
 \draw (n121)--(n131);
 
 \draw (n10)--(n11);
 \draw (n40)--(n41);
 \draw (n50)--(n51);
 \draw (n80)--(n81);
 \draw (n90)--(n91);
 \draw (n120)--(n121);

 \draw (6.5,1) -- (6.5,2);
 \draw (-0.5,0.5) circle (1cm);
 \draw (13.5,0.5) circle (1cm);
 

\end{tikzpicture}
\end{center}
\caption{The normal $7$-edge-coloring in the horizontal case.} \label{fig:A1}
\end{figure}
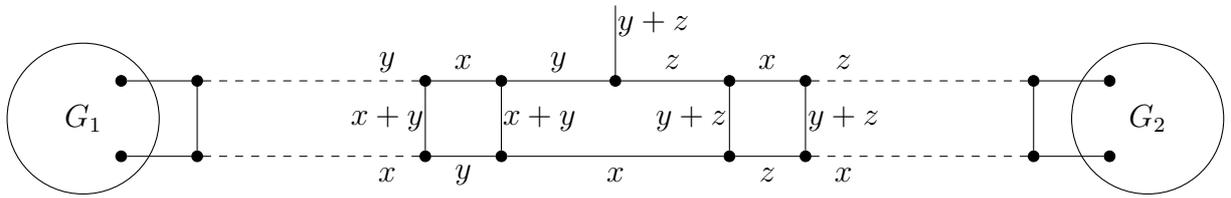

We extend the flows $\theta_1$ and $\theta_2$ to a normal $7$-edge-coloring of $G'$ as it is shown on Figure \ref{fig:A1}.
Moreover, the $u_0-u_{i-1}$ and $v_0-v_{i-1}$ subpaths of $L$ are colored $x-y$, alternatively. Similarly, the $u_i-u_{m}$ and $v_i-v_{m}$ subpaths of $L$ are colored $x-z$, alternatively.

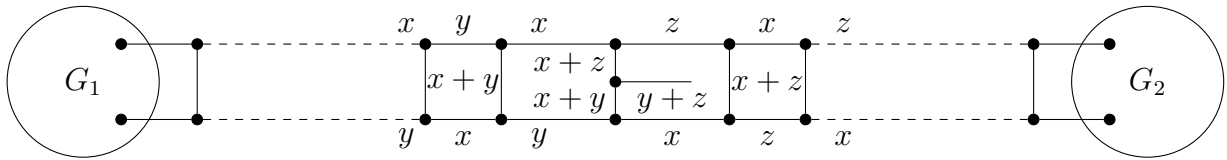
\begin{figure}[!htbp]
\begin{center}

\begin{tikzpicture}



\node at (-0.5, 0.5) {$G_1$};
\node at (13.5, 0.5) {$G_2$};

\node at (4.5, 0.5) {$x+y$};
\node at (8.5, 0.5) {$x+z$};

\node at (5.9, 0.25) {$x+y$};
\node at (5.9, 0.75) {$x+z$};
\node at (7.25, 0.25) {$y+z$};

\node at (3.75, -0.25) {$y$};
\node at (4.5, -0.25) {$x$};
\node at (5.5, -0.25) {$y$};
\node at (7.25, -0.25) {$x$};
\node at (8.5, -0.25) {$z$};
\node at (9.5, -0.25) {$x$};  

\node at (3.75, 1.25) {$x$};
\node at (4.5, 1.25) {$y$};
\node at (5.5, 1.25) {$x$};
\node at (7.25, 1.25) {$z$};
\node at (8.5, 1.25) {$x$};
\node at (9.5, 1.25) {$z$}; 
 
 \tikzstyle{every node}=[circle, draw, fill=black!,
                        inner sep=0pt, minimum width=4pt]
    \node at (0,0) (n00) {};
    \node at (1,0) (n10) {};
    \node at (4,0) (n40) {};
    \node at (5,0) (n50) {};
    \node at (6.5,0) (n650) {};
    \node at (8,0) (n80) {};
    \node at (9,0) (n90) {};
    \node at (12,0) (n120) {};
    \node at (13,0) (n130) {};

    \node at (6.5,0.5) (n6505) {};
    
    \node at (0,1) (n01) {};
    \node at (1,1) (n11) {};
    \node at (4,1) (n41) {};
    \node at (5,1) (n51) {};
    \node at (6.5,1) (n651) {};
    \node at (8,1) (n81) {};
    \node at (9,1) (n91) {};
    \node at (12,1) (n121) {};
    \node at (13,1) (n131) {};


 \draw (n00)--(n10);
 \draw[dashed] (n10)--(n40);
 \draw (n40)--(n50);
 \draw (n50)--(n650);
 \draw (n650)--(n80);
 \draw (n80)--(n90);
 \draw[dashed] (n90)--(n120);
 \draw (n120)--(n130);
 
 \draw (n01)--(n11);
 \draw[dashed] (n11)--(n41);
 \draw (n41)--(n51);
 \draw (n51)--(n651);
 \draw (n651)--(n81);
 \draw (n81)--(n91);
 \draw[dashed] (n91)--(n121);
 \draw (n121)--(n131);
 
 \draw (n10)--(n11);
 \draw (n40)--(n41);
 \draw (n50)--(n51);
 \draw (n650)--(n6505);
 \draw (n6505)--(n651);
 \draw (n80)--(n81);
 \draw (n90)--(n91);
 \draw (n120)--(n121);

 \draw (6.5,0.5) -- (7.5,0.5);
 \draw (-0.5,0.5) circle (1cm);
 \draw (13.5,0.5) circle (1cm);
 

\end{tikzpicture}
\end{center}
\caption{The normal $7$-edge-coloring in the vertical case.} \label{fig:B1}
\end{figure}

Finally, we consider the case when $e$ is a vertical edge of the ladder. We assume the same notations that we had in the horizontal case. Now, we extend the flows $\theta_1$ and $\theta_2$ to a normal $7$-edge-coloring of $G'$ as it is shown on Figure \ref{fig:B1}.
\end{proof}

Our next theorem generalizes the result of the previous theorem for the case when we may have many pendant edges.

\begin{theorem}
\label{thm:ManyBridgesNormal7colors} Let $G'$ be a simple graph such that any of its vertices is of degree one or degree three. Moreover, assume that all bridges of $G'$ are incident to vertices of degree one. Then $\chi'_N(G')\leq 7$. 
\end{theorem}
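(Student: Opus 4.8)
The plan is to induct on the number of degree-one vertices of $G'$, i.e. the number of pendant edges, using Theorems \ref{thm:7bridgeless} and \ref{thm:EndblocksNormal7colors} as the base cases. First I would reduce to the connected case, since a normal $7$-edge-coloring can be assembled componentwise from the same palette of seven colors. So assume $G'$ connected. The key structural observation is that every pendant edge is a bridge and, by hypothesis, every bridge is incident to a degree-one vertex, hence is a pendant edge; thus the bridges of $G'$ are \emph{exactly} the pendant edges. Deleting the degree-one vertices therefore leaves a connected bridgeless graph $B$, so $G'$ is a single $2$-edge-connected block $B$ with pendant edges attached (the degenerate cases $B=\emptyset$ and $|V(B)|=1$ giving $K_2$ and the claw $K_{1,3}$). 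Moreover each vertex of $B$ carries either zero or one pendant, since a vertex with exactly two pendants would force its third edge to be a bridge joining two degree-three vertices, contradicting the hypothesis. The degenerate configurations $K_2$, $K_{1,3}$, and the ``net'' (a triangle with one pendant at each vertex) are all $3$-edge-colorable, and a proper $3$-edge-coloring is automatically normal, so they are dealt with directly.

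With these removed, $B$ is a $2$-edge-connected block carrying $k$ pendants. If $k=0$ then $G'=B$ is a bridgeless cubic graph and Theorem \ref{thm:7bridgeless} applies; if $k=1$ then suppressing the unique degree-two vertex of $B$ recovers a bridgeless cubic graph of which $B$ is a single-edge subdivision, so $G'$ is precisely of the form covered by Theorem \ref{thm:EndblocksNormal7colors}. For $k\geq 2$, the pendant-bearing vertices are degree-two vertices of $B$, so each is adjacent to at most two of the others. If some two of them, $u_1,u_2$ carrying pendants $u_1v_1,u_2v_2$, are \emph{non-adjacent}, I perform the clean reduction: delete $v_1,v_2$ and add the edge $u_1u_2$. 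The resulting $G''$ is simple, has two fewer pendants, and stays in our class (the new edge is not a bridge since $B$ is $2$-edge-connected, and all remaining bridges are still pendants), so by induction it has a normal $7$-edge-coloring $c''$. I then set $c(u_1v_1)=c(u_2v_2)=c''(u_1u_2)$ and keep $c=c''$ elsewhere; this leaves the color set $S_c(x)$ at every vertex $x$ equal to $S_{c''}(x)$, so every non-pendant edge of $G'$ retains its poor/rich status from $G''$ while the pendants are unconstrained, and $c$ is normal.

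The delicate case, which I expect to be the main obstacle, is when no non-adjacent pendant pair exists. A degree count shows this forces the pendant-vertices to be pairwise adjacent, which is impossible for $k\geq 4$ (a degree-two vertex cannot be adjacent to three others), gives the net for $k=3$ (already handled), and leaves only $k=2$ with $u_1,u_2$ adjacent. Here I would not reduce within the class but instead suppress: deleting $v_1,v_2$ makes $u_1,u_2$ degree two along a path $a-u_1-u_2-b$, and suppressing both yields a bridgeless cubic (pseudo)graph $G_0$ equal to $B$ with this path replaced by an edge $ab$; one checks $a\neq b$, since $a=b$ would make the triangle on $\{a,u_1,u_2\}$ attached to the rest by a single bridge, contradicting $2$-edge-connectivity of $B$. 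Taking a normal $7$-edge-coloring $c_0$ of $G_0$ arising from a nowhere-zero $\mathbb{Z}_2^3$-flow via Theorem \ref{thm:7bridgeless}, I expand $ab$ back into the path with pendants, setting $c(au_1)=c(u_2b)=c_0(ab)$ so that $S(a),S(b)$ are unchanged, and then choosing a color $\beta$ on $u_1u_2$ and a common color $\gamma$ on the two pendants.

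The only genuine computation is checking that $\beta,\gamma$ can be chosen so that $au_1,u_1u_2,u_2b$ are all poor or rich. Write the three flow values at $a$ as $\alpha,\mu,\nu$ with $\alpha=c_0(ab)$ and $\mu+\nu=\alpha$, and those at $b$ as $\alpha,\sigma,\tau$. Taking $\beta\neq\gamma$, both different from $\alpha$, makes $u_1u_2$ poor automatically, because both $u_1$ and $u_2$ then see exactly the colors $\{\alpha,\beta,\gamma\}$; and $au_1$ (resp. $u_2b$) is poor or rich precisely when $\{\beta,\gamma\}$ is equal to or disjoint from $\{\mu,\nu\}$ (resp. $\{\sigma,\tau\}$). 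Since $\{\alpha,\mu,\nu,\sigma,\tau\}$ uses at most five of the seven nonzero elements of $\mathbb{Z}_2^3$, at least two values remain, and choosing $\{\beta,\gamma\}$ among them makes $\{\beta,\gamma\}$ disjoint from both pairs, so $au_1$ and $u_2b$ are rich. This closes the induction and yields $\chi'_N(G')\leq 7$.
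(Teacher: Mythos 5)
Your proof is correct and follows the same overall strategy as the paper's: induct on the number of pendant edges, reduce by deleting two degree-one vertices and joining the two pendant-bearing vertices by a new edge, and copy the colour of that edge back onto the two pendants (which are automatically poor, so only the unchanged vertex colour-sets matter). The one place where you genuinely diverge is the case of exactly two pendant edges attached to \emph{adjacent} vertices $u_1,u_2$. The paper treats this uniformly with the non-adjacent case by adding a possibly \emph{parallel} edge $u_1u_2$: the result is a bridgeless cubic multigraph, Theorem \ref{thm:7bridgeless} applies (the $8$-flow theorem holds with parallel edges, and the two parallel edges necessarily receive distinct flow values, so the transferred colouring is proper). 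You instead suppress $u_1,u_2$, apply Theorem \ref{thm:7bridgeless} to the resulting cubic graph on the edge $ab$, re-expand, and choose $\beta$ on $u_1u_2$ and $\gamma$ on the pendants outside the at most five values in $\{\alpha,\mu,\nu,\sigma,\tau\}$; the computation is correct (with $a\neq b$ ruled out exactly as you say, and $u_1u_2$ poor, $au_1$ and $u_2b$ rich). This buys you the ability to stay entirely within simple graphs for the inductive hypothesis, at the cost of an explicit case analysis that the multigraph reduction renders unnecessary. Your preliminary structural observations (bridges coincide with pendant edges, the pruned graph $B$ is $2$-edge-connected, each vertex carries at most one pendant, the degenerate cases $K_2$, $K_{1,3}$ and the net) are more detailed than what the paper records, but all of them are sound.
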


\begin{proof} We follow the strategy of the proof of Lemma 6.3 from \cite{KraletalSteiner}. Our proof is by induction on the number of pendant edges. Clearly, we can assume that $G'$ is a connected graph. If the number of pendant edges of $G'$ is zero or one, then the statement follows from Theorem \ref{thm:7bridgeless} and Theorem \ref{thm:EndblocksNormal7colors}. Let us consider the case when this number is two. Let $u$ and $v$ be the two vertices of $G'$ that are incident to pendant edge. Consider a graph $H$ obtained from $G'$ by removing the degree-one vertices of $G'$ and adding (a possibly parallel) edge $uv$. Observe that $H$ is a bridgeless cubic graph. Hence by Theorem \ref{thm:7bridgeless} it admits a normal 7-edge-coloring. Now, consider a 7-edge-coloring of $G'$ by coloring the pendant edges of $G'$ with the color of the edge $uv$. Clearly, the coloring is normal.

Now, by induction, assume that the statement is true for all simple graphs with fewer pendant edges, and consider a simple graph $G'$ with $t\geq 3$ pendant edges. Let $u$, $v$ and $w$ be any three vertices of $G'$ incident to pendant edge. If $u$, $v$ and $w$ are pairwise adjacent, then since $G'$ is connected we have that $G'$ is obtained from a triangle by attaching a pendant edge to each of its vertices. In this case, we color $G'$ with three colors. Clearly, it is a normal 3-edge-coloring. 

Thus, without loss of generality, we can assume that $u$ and $v$ are not adjacent. Consider a graph $H$ obtained from $G'$ by removing the degree-one vertices of $G'$ incident to $u$ and $v$, and adding the edge $uv$. Observe that $H$ is a simple graph with less than $t$ pendant edges. By the induction hypothesis, it admits a normal 7-edge-coloring. Now, consider a 7-edge-coloring of $G'$ by coloring the pendant edges of $G'$ with the color of the edge $uv$. Clearly, the coloring is normal.
\end{proof}

Let $k$ be the smallest constant, such that any simple cubic graph $G$ admits a normal $k$-edge-coloring. Theorem \ref{thm:7examplethm} suggests that $k\geq 7$. A $k$-edge-coloring of a simple cubic graph is said to be strong, if any edge is rich in this coloring. In \cite{Andersen1992} Andersen has shown that any simple cubic graph admits a strong edge-coloring with ten colors. Thus, we have that $k\leq 10$. Following the approach of Andersen, in \cite{Bilkova12}, it is shown that any simple cubic graph admits a normal edge-coloring with nine colors. Thus $k\leq 9$. Now, using Theorem \ref{thm:EndblocksNormal7colors}, we further improve the latter result by obtaining the best-possible upper bound.

%

\begin{theorem}\label{thm:EnblockImplies7SimpleCase} For any simple cubic graph $G$, we have $\chi'_N(G)\leq 7$.
\end{theorem}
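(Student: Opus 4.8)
The plan is to deduce the theorem from a slightly more general statement proved by induction. Call a simple graph \emph{admissible} if it is connected and every vertex has degree one or three, and for such a graph say that a $7$-edge-coloring is \emph{good} if every non-pendant edge is poor or rich, the pendant edges (the edges at degree-one vertices) being left unconstrained, exactly as in Theorems \ref{thm:EndblocksNormal7colors} and \ref{thm:ManyBridgesNormal7colors}. I would prove that every admissible graph has a good $7$-edge-coloring, by induction on its number of \emph{internal} bridges, meaning bridges both of whose endpoints have degree three. Since a simple cubic graph is admissible, has no pendant edges, and all of its bridges are internal, a good coloring of it is precisely a normal $7$-edge-coloring, so the main theorem follows at once (one may assume $G$ connected, coloring the components separately).

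For the base case, an admissible graph with no internal bridge has all of its bridges incident to a degree-one vertex, so it is covered directly by Theorem \ref{thm:ManyBridgesNormal7colors} (which in turn rests on Theorems \ref{thm:7bridgeless} and \ref{thm:EndblocksNormal7colors}). For the inductive step, pick an internal bridge $b=xy$ and delete it; this splits the graph into the component $G_1$ containing $x$ and the component $G_2$ containing $y$, in which $x$ and $y$ now have degree two. Restore the degrees by attaching a new pendant edge at $x$ in $G_1$ and at $y$ in $G_2$, obtaining graphs $G_1^{+}$ and $G_2^{+}$. These are still connected, simple, and admissible, and every internal bridge of the original graph other than $b$ survives in exactly one of them while $b$ itself disappears and the two new edges are pendant; hence each of $G_1^{+},G_2^{+}$ has strictly fewer internal bridges, and the induction hypothesis supplies good $7$-edge-colorings of both.

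It remains to glue these two colorings into a good coloring of the whole graph, in which $b$ must now become poor or rich. Write $\alpha_1,\alpha_2$ for the colors at $x$ in $G_1^{+}$ other than its pendant and $c_1$ for the color of that pendant, and write $\beta_1,\beta_2,c_2$ for the analogous colors at $y$ in $G_2^{+}$. Since the colorings are proper, $\alpha_1,\alpha_2,c_1$ are three distinct colors and so are $\beta_1,\beta_2,c_2$. Because any permutation of the seven colors preserves both properness and the poor/rich status of every edge, I may relabel the colors of $G_2^{+}$ by a permutation $\sigma\in S_7$ chosen so that $\sigma(\beta_1)=\alpha_1$, $\sigma(\beta_2)=\alpha_2$ and $\sigma(c_2)=c_1$; such a $\sigma$ exists since these are three distinct source colors mapped to three distinct target colors. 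After this relabeling both pendants carry color $c_1$, so identifying them into the single edge $b$ is consistent, and around $b$ we see exactly the colors $c_1,\alpha_1,\alpha_2$ on each side; thus $|S_c(x)\cup S_c(y)|=3$ and $b$ is poor. The coloring stays proper at $x$ and $y$, and the color multiset seen at $x$ and at $y$ is unchanged by the gluing, so every previously poor or rich edge remains so. Hence the glued coloring is good, completing the induction.

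The main obstacle, and the only place that needs genuine care, is this gluing step together with the bookkeeping that frames the induction over the right class of graphs: one must verify that cutting an internal bridge and re-attaching pendants keeps the two pieces simple and admissible with strictly fewer internal bridges, and that the $S_7$-relabeling can always be chosen to make $b$ poor without disturbing properness at $x,y$ or the already-established poor/rich status of every other edge. Once the invariant ``non-pendant edges are poor or rich'' is isolated, the argument becomes a clean consequence of having seven colors, hence enough freedom in $S_7$ to fix three prescribed images; the essential structural inputs (the bridgeless and endblock cases) are exactly Theorems \ref{thm:7bridgeless}, \ref{thm:EndblocksNormal7colors} and \ref{thm:ManyBridgesNormal7colors}.
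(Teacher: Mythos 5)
Your argument is correct, and it bottoms out in exactly the same ingredients as the paper's proof (Theorems \ref{thm:7bridgeless}, \ref{thm:EndblocksNormal7colors} and, above all, \ref{thm:ManyBridgesNormal7colors}), but the decomposition is organized differently. The paper deletes \emph{all} bridges of $G$ at once, applies Theorem \ref{thm:ManyBridgesNormal7colors} to each component of the resulting graph $H$ after attaching pendant edges at its degree-two vertices, and then reglues every bridge simultaneously by renaming colors; that regluing is dispatched in a single sentence and implicitly has to cope with components of $H$ that are isolated vertices, i.e.\ vertices of $G$ incident to two or three bridges, where the colors of several bridges must be coordinated at once. You instead peel off one internal bridge at a time by induction on their number, which localizes the gluing: at each step only one bridge $b=xy$ is reinserted, and your explicit $S_7$-relabeling (sending the three colors seen at $y$ onto the three colors seen at $x$, pendant to pendant) makes $b$ poor while visibly preserving properness and the poor/rich status of every other edge, since the sets $S_c(x)$ and $S_c(y)$ are unchanged by the identification. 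The bookkeeping you flag as the delicate point does go through: cutting an internal bridge and re-attaching pendants keeps the pieces simple, connected and admissible, destroys $b$, creates only non-internal bridges, and turns no non-bridge into a bridge, so each piece has strictly fewer internal bridges and the base case is precisely the hypothesis of Theorem \ref{thm:ManyBridgesNormal7colors}. Your version trades the paper's brevity for a cleaner, fully local verification of the gluing, and in particular the multi-bridge-vertex case never needs separate treatment.
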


\begin{proof} Consider a graph $H$ obtained from $G$ by removing all the bridges of $G$. Observe that each component $C$ of $H$ is either an isolated vertex or a bridgeless graph in which all degrees are two or three. Fix a component with at least one edge. Attach to any of its degree two vertices one pendant edge such that the resulting graph meets the condition of Theorem \ref{thm:ManyBridgesNormal7colors}. We have that the resulting graph admits a normal 7-edge-coloring. Now, in order to complete the proof, observe that we can rename the colors in each component of $H$, glue the colorings in each of the components so that the resulting coloring is a normal 7-edge-coloring of $G$.
\end{proof}

\section*{Acknowledgement} We would like to thank Robert \v{S}\'{a}mal for providing us with a copy of \cite{Bilkova12} and for a useful discussion over normal colorings. The second author is indebted to Professor Cun-Quan Zhang for his advice on the topic of the present paper. Finally, we thank the two anonymous referees for their useful comments that helped us to improve the presentation of the paper.



\bibliographystyle{elsarticle-num}


\section*{Appendix}

In this section we discuss possible directions that our Lemma \ref{lem:chi7twoedgeNowhereZero8flow} can be strengthened. We present some examples which show that our lemma is best-possible.

One may wonder whether the statement of Lemma \ref{lem:chi7twoedgeNowhereZero8flow} can be strengthened to prove that the three edges of a $3$-connected cubic graph incident to a vertex can be made rich in a nowhere zero $\mathbb{Z}^{3}_{2}$-flow. This statement is not true as the following proposition shows.

\begin{proposition}
\label{prop:K33threeedges} The complete bipartite graph $K_{3,3}$ does not admit a nowhere zero $\mathbb{Z}^{3}_{2}$-flow, such that its three edges incident to a vertex are rich.
\end{proposition}

\begin{proof} Assume the opposite, and let $\theta$ be a nowhere zero $\mathbb{Z}^{3}_{2}$-flow of $K_{3,3}$ such that the three edges incident to the vertex $v$ are rich. Assume that the flow values of edges incident to $v$ are $x,y,x+y$. Consider the graph $K_{3,3}-v$, which is isomorphic to $K_{2,3}$. Observe that since the edges incident to $v$ are rich, $x,y$ and $x+y$ cannot appear on edges of $K_{3,3}-v$. Thus, there are only four non-zero values of $\mathbb{Z}^{3}_{2}$, that can appear on six edges of $K_{3,3}-v$. Hence, there are at least two edges $e_1$ and $e_2$ of $K_{3,3}-v$ which have the same flow value. Observe that $e_1$ and $e_2$ cannot be adjacent. Let the flow value of $e_1$ and $e_2$ be $z_1$, and let $z_2$ be the flow of the edge that connects $e_1$ and $e_2$. Observe that we have two edges of $K_{3,3}$ which must have flow value $z_1+z_2$. One of these edges in incident to $v$, hence $z_1+z_2\in \{x,y, x+y\}$. On the the hand, the second edge of $K_{3,3}$ with flow value $z_1+z_2$ belongs to $K_{3,3}-v$. Hence $z_1+z_2\notin \{x,y, x+y\}$. This is a contradiction.
\end{proof}

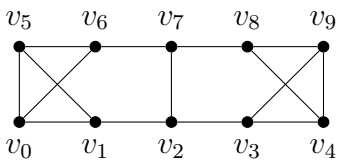
\begin{figure}[ht]

  \begin{center}

		\begin{tikzpicture}
		
		\node at (0, -0.35) {$v_0$};
		\node at (1, -0.35) {$v_1$};
		\node at (2, -0.35) {$v_2$};
		\node at (3, -0.35) {$v_3$};
		\node at (4, -0.35) {$v_4$};
		
		\node at (0, 1.35) {$v_5$};
		\node at (1, 1.35) {$v_6$};
		\node at (2, 1.35) {$v_7$};
		\node at (3, 1.35) {$v_8$};
		\node at (4, 1.35) {$v_9$};
		
		\tikzstyle{every node}=[circle, draw, fill=black!50,
                        inner sep=0pt, minimum width=4pt]

			\node[circle,fill=black,draw] at (0,0) (n00) {};
																								
			\node[circle,fill=black,draw] at (0, 1) (n01) {};
			
			\node[circle,fill=black,draw] at (1,0) (n10) {};
			\node[circle,fill=black,draw] at (1,1) (n11) {};
			
			\node[circle,fill=black,draw] at (2,0) (n20) {};
			\node[circle,fill=black,draw] at (2,1) (n21) {};
			
			\node[circle,fill=black,draw] at (3,0) (n30) {};
			\node[circle,fill=black,draw] at (3,1) (n31) {};
			
			\node[circle,fill=black,draw] at (4,0) (n40) {};
			\node[circle,fill=black,draw] at (4,1) (n41) {};

			\path[every node]
			(n00) edge  (n01)
			edge (n10)
			edge (n11)
																			
			(n01) edge (n11)
			edge (n10)
			(n10) edge (n20)
			 (n11) edge (n21)
			
			(n20) edge (n21)
			edge (n30)
			(n21) edge (n31)
			
			(n30) edge (n41)
			edge (n40)
			
			(n31) edge (n40)
			edge (n41)
			
			(n40) edge (n41)
			
			;
		\end{tikzpicture}
																
	\end{center}
	
	\caption{The vertical edge is poor in any normal $6$-edge-coloring.}\label{fig:ExampleNonrichedgeSimple6}
\end{figure}

Another question that arises is the following: can we show that any edge of a bridgeless cubic graph can be made rich in a normal $6$-edge-coloring? Our next proposition addresses this question by giving a negative answer to it.

\begin{proposition}\label{prop:Normal6coloring} The edge $v_2v_7$ is always poor in any normal $6$-edge-coloring of the graph $G$ from Figure \ref{fig:ExampleNonrichedgeSimple6}.
\end{proposition}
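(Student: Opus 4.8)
The plan is to reduce the global statement to a purely local claim about the left ``block'' of $G$, namely the subgraph induced by $\{v_0,v_1,v_5,v_6\}$. This block consists of the two triangles $v_0v_1v_5$ and $v_0v_5v_6$ glued along the edge $v_0v_5$ (equivalently, $K_4$ minus the edge $v_1v_6$), and the only edges leaving it toward the centre are $v_1v_2$ and $v_6v_7$. The crux I would isolate is the claim that in \emph{every} normal $6$-edge-coloring $c$ of $G$ one has $c(v_1v_2)=c(v_6v_7)$. Granting this, the conclusion is immediate, so I would prove the claim first and deduce the proposition afterward.

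To prove the claim I would fix a normal $6$-edge-coloring $c$ and normalize the colors so that $c(v_0v_5)=1$, $c(v_0v_1)=2$, $c(v_1v_5)=3$ (possible since these three edges form a triangle and so carry distinct colors). Then I would split on whether the shared edge $v_0v_5$ is poor or rich. If $v_0v_5$ is poor, then the union $S_c(v_0)\cup S_c(v_5)$ already contains $\{1,2,3\}$, which forces $c(v_0v_6),c(v_5v_6)\in\{1,2,3\}$; the adjacency constraints at $v_0$ and $v_5$ pin these down to $c(v_0v_6)=3$ and $c(v_5v_6)=2$. Consequently $S_c(v_5)=\{1,2,3\}$, and since at most six colors are available the edges $v_1v_5$ and $v_5v_6$ cannot be rich, so keeping them normal forces $c(v_1v_2)=c(v_6v_7)=1$. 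If instead $v_0v_5$ is rich, then $c(v_0v_6),c(v_5v_6)$ are two distinct colors outside $\{1,2,3\}$, say $4$ and $5$; inspecting the four remaining block edges $v_0v_1$, $v_1v_5$, $v_0v_6$, $v_5v_6$ in turn, each is forced to be rich, and richness together with the six-color bound squeezes both $c(v_1v_2)$ and $c(v_6v_7)$ to the single remaining color $6$. In both cases $c(v_1v_2)=c(v_6v_7)$, as claimed.

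With the claim in hand, write $\gamma:=c(v_1v_2)=c(v_6v_7)$. Since $v_1v_2$ is incident to $v_2$ and $v_6v_7$ is incident to $v_7$, both $\gamma$ and $c(v_2v_7)$ lie in $S_c(v_2)\cap S_c(v_7)$, and these two colors are distinct because $c$ is proper at $v_2$. Hence $|S_c(v_2)\cap S_c(v_7)|\geq 2$, so $|S_c(v_2)\cup S_c(v_7)|\leq 4<5$; as $c$ is normal, $v_2v_7$ must be poor.

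The main obstacle is the case analysis of the second paragraph, and in particular the repeated appeal to the fact that exactly six colors are on hand: it is this scarcity that rules out the intermediate size-$4$ unions for the block edges and forces the two exit edges to agree. With a seventh color the forcing breaks down, which is precisely why the statement is special to normal $6$-colorings. I would also remark that the whole argument is local to the left block; the symmetric right block yields the same conclusion but is not needed.
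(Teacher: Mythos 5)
Your proposal is correct, and it reaches the conclusion by a route that is organized differently from the paper's. The paper argues by contradiction: it assumes $v_2v_7$ is rich, shows that $v_0v_5$ must then be rich, and then shows that the five edges of the left block must carry five distinct colors all different from $c(v_1v_2)$ and $c(v_6v_7)$, so that seven colors would be needed. You instead prove the unconditional structural claim that \emph{every} normal $6$-edge-coloring assigns the same color to the two edges $v_1v_2$ and $v_6v_7$ of the $2$-edge-cut, and then deduce poorness of $v_2v_7$ from $|S_c(v_2)\cap S_c(v_7)|\geq 2$. Both arguments are local to the left $K_4$-minus-an-edge block and both turn on the same dichotomy (whether $v_0v_5$ is poor or rich) together with the scarcity of six colors in the rich case; but your formulation is more modular and slightly stronger, since the cut-edge claim is exactly the $6$-coloring analogue of the observation made at the end of the paper's Appendix for nowhere-zero $\mathbb{Z}_2^3$-flows (equal values on a $2$-edge-cut), and it makes transparent why the forcing collapses once a seventh color is allowed. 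One small imprecision: in your Case 1 the reason $v_1v_5$ and $v_5v_6$ cannot be rich is not the six-color bound but the fact that $S_c(v_1)\cap S_c(v_5)\supseteq\{2,3\}$ (and likewise for $v_5,v_6$), so the relevant unions have at most four elements regardless of how many colors are available; the conclusion you draw from it is nevertheless correct.
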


\begin{proof} Assume that there is a normal $6$-edge-coloring $c$ of $G$ such that $v_2v_7$ is rich. Without loss of generality, we can assume that $c(v_2v_7)=1$, $c(v_1v_2)=2$, $c(v_6v_7)=3$, $c(v_2v_3)=4$ and $c(v_7v_8)=5$. Let us show that the edge $v_0v_5$ is rich.

On the opposite assumption, assume that $v_0v_5$ is poor. Then $c(v_0v_1)=c(v_5v_6)=\alpha$ and $c(v_0v_6)=c(v_1v_5)=\beta$. Consider the edge $v_1v_5$. It has to be poor, as it is adjacent to two edges of color $\alpha$. Hence $c(v_0v_5)=2$. Now, consider the edge $v_0v_6$. Similarly, one can show that $c(v_0v_5)=3$. This gives the required contradiction. 

Thus, the edge $v_0v_5$ has to be rich, which in particular means that the colors of edges $v_0v_1$, $v_0v_6$, $v_1v_5$ and $v_5v_6$ are pairwise different. Let us show that the colors of these edges and the edge $v_0v_5$ cannot be 2 or 3. Clearly, since the graph is symmetric, we can show only for the case of color 2. Note that the edges $v_0v_1$ and $v_1v_5$ cannot have color 2. If the edge $v_0v_5$ has color 2, then the edge $v_1v_5$ has to be poor, hence the colors of edges $v_0v_1$ and $v_5v_6$ has to be the same, which gives the required contradiction. If the edge $v_5v_6$ has color 2, then the edge $v_1v_5$ has to be poor, hence the edges $v_0v_5$ and $v_0v_1$ must have the same color, which gives the required contradiction. Finally, if the color of $v_0v_6$ is 2, then the edge $v_0v_1$ has to be poor, hence the colors of edges $v_0v_5$ and $v_1v_5$ have to be the same, which gives the required contradiction. 

Thus, none of the five edges of $G$ that belong to the subgraph induced by $v_0, v_1, v_5, v_6$ can have color 2 or 3. Hence, $G$ requires at least $7$ colors in such a normal edge coloring, which in particular means that the edge $v_2v_7$ must be poor in any normal $6$-edge-coloring.
\end{proof}

Finally, one may wonder how important is the assumption of $3$-connectivity in Lemma \ref{lem:chi7twoedgeNowhereZero8flow}? Consider the graph from Figure \ref{fig:ExampleNonrichedgeSimple6}. Observe that the vertical edge is adjacent to two edges that form a 2-edge-cut. Hence for any nowhere zero $\mathbb{Z}^{3}_{2}$-flow, the values of the flow on these edges should be the same. This means that the vertical edge is going to be poor in any nowhere zero $\mathbb{Z}^{3}_{2}$-flow.

\end{document}